\newcommand{\NN}[0]{\mathbb{N}}
\newcommand{\EE}[0]{\mathbb{E}}
\newcommand{\PP}[0]{\mathbb{P}}
\newcommand{\RR}[0]{\mathbb{R}}
\newcommand{\cW}[0]{\mathcal{W}}
\newcommand{\ind}[0]{\mathbbm{1}}
\newcommand{\bu}[0]{\mathbf{u}}
\newcommand{\bv}{\mathbf{v}}
\newcommand{\bx}[0]{\boldsymbol{x}}
\newcommand{\limt}[0]{\lim_{t \rightarrow \infty}}
\newcommand{\wtilde}[1]{\widetilde{#1}}
\newcommand{\1}{\mathbbm{1}}
\numberwithin{equation}{section}
\newcommand{\edit}[1]{\color{black}{#1}}
\newcommand{\qedsymbol}[0]{\edit{\square}}
\shorttitle{Simulating QSDs on Reducible State Spaces} 
\begin{document}

\title{Simulation from quasi-stationary distributions\\on reducible state spaces} 

\authorone[University of Warwick]{A. Griffin}
\authorone[University of Warwick]{P.A. Jenkins}
\authorone[University of Warwick]{G.O. Roberts}
\authorone[University of Warwick]{S.E.F Spencer}
\addressone{Dept of Statistics, University of Warwick, Coventry, CV4 7AL, UK}

\begin{abstract}
	\edit{
	Quasi-stationary distributions (QSDs)
	 arise from stochastic processes that exhibit transient equilibrium behaviour on the way to absorption. 
	 QSDs are often mathematically intractable and even drawing samples from them is not straightforward. 
	In this paper the framework of Sequential Monte Carlo samplers is utilized to simulate QSDs and several novel resampling techniques are proposed to accommodate models with reducible state spaces, with particular focus on preserving particle diversity on discrete spaces. Finally an approach is considered to estimate eigenvalues associated with QSDs, such as the decay parameter.}
\end{abstract}

\keywords{Quasi-stationary distributions; limiting conditional distributions; simulation; resampling methods; Sequential Monte Carlo} 

\ams{60J27}{62G09} 
\section{Introduction}


Quasi-stationary distributions (QSDs) and limiting conditional distributions (LCDs) arise from processes that exhibit temporary equilibrium behaviour before hitting an absorbing state. They appear commonly in population processes \cite{meleard2012}, epidemic models \cite{clancy2003, nasell1999, neal2014} and models in population genetics \cite{lambert2008, wri:1931}. 
\edit{Recently there has been a significant increase in interest in the use of quasi-stationarity in the context of simulation using sequential Monte Carlo methods. For example \cite{pollock2016scalable} uses quasi-stationary simulation for Bayesian inference with big data sets, an entirely different motivation from ours in this paper.}
\edit{The mathematical analysis of QSDs and LCDs are} often extremely challenging due to the fact that even for relatively simple processes the distributions do not always exist; if they do exist they may not be unique; and there is typically no closed form expression available for the distribution function. 
Furthermore, it is reasonably challenging to simulate from QSDs and LCDs due to the fact that the absorption event becomes increasingly likely through time but, to be representative of the QSD, any sample paths must survive long enough to have `forgotten' their starting state.

In this paper we focus on the problem of simulating from the LCD of a stochastic process on a countable state space. In particular we make use of the Sequential Monte Carlo (SMC) sampler \cite{delmoral2006} 
and discuss several novel resampling steps which make SMC sampling for QSDs and LCDs more efficient. In particular we address the difficulties that arise when sampling from processes that have a reducible state space.

\subsection{Quasi-stationary and limiting conditional distributions}\label{subsec: qsd}
Consider a continuous-time stochastic process $(X(t))_{t\geq0}$ evolving on state space $\Omega = S \cup \{0\}$, where $S$ contains transient states and $0$ is the identification of all absorbing states. Denote the transition rate matrix by $\widetilde{Q}$ and denote its restriction to the transient states $S$ by $Q$.

\begin{defn}\label{def: reduce}
A process \edit{$(X(t))_{t \geq 0}$} (or the state space $S$ over which it evolves) is said to be \emph{irreducible} if for every $i,j \in S$ there exists $t \in (0,\infty)$ such that
$P_{ij}(t): = \PP[X(t) = j | X(0) = i] > 0 $
and $P_{ji}(t) > 0$; every state in $S$ can be accessed from every other state. A process and its state space are considered \emph{reducible} if not irreducible.
\end{defn}

\begin{defn}\label{def: qsd}
	A proper probability distribution $\bu = (u_j: j \in S)$ is said to be a \emph{quasi-stationary distribution} (QSD) for a Markov process $(X(t))_{t \geq 0}$ if, for every $t \geq 0$,
	\begin{align*}
		u_j = \PP[X(t) = j | X(0) \sim \bu, X(t) \in S], \qquad j \in S.
	\end{align*}
	That is, conditional on the process not being absorbed, the distribution of \edit{$(X(t))_{t \geq 0}$} started from initial distribution $\bu$ is time-invariant.
\end{defn}
		
\begin{defn}\label{def: lcd}
	Given a proper probability distribution $\nu$ on $S$, a proper probability distribution $\bu$ on $S$ is said to be a $\bv$-\emph{limiting conditional distribution} ($\nu$-LCD) for the Markov process $(X(t))_{t\geq 0}$ if, for each $j \in S$,
	\begin{align*}
		u_j = \limt \PP[X(t) = j | X(0) \sim \nu, X(t) \in S].
	\end{align*}
	If $\nu$ is a point mass at state $i \in S$, then we may refer to this $\nu$-LCD as an $i$-LCD.
\end{defn}

Every QSD $\bu$ is a $\bu$-LCD, and every LCD is a QSD \cite[Prop 1]{meleard2012}. We outline here some results regarding the existence and uniqueness of QSDs. 

\begin{prop}\label{prop: eu}
When $X = (X(t))_{t \geq 0}$ is finite and irreducible, there exists a unique LCD (which is independent of initial distribution $\nu$) and there exists a unique QSD. Moreover, these two distributions are equal. \cite[Thm 3]{darroch1967}. 
\end{prop}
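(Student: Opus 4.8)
The plan is to reduce the whole statement to the Perron--Frobenius theorem applied to the sub-generator $Q$, equivalently to the sub-stochastic semigroup $P(t)=e^{tQ}$ on $S$. First I would assemble the structural facts. Fix $c\ge\max_{i\in S}|Q_{ii}|$, so that $M:=Q+cI$ is a nonnegative matrix whose accessibility graph coincides with that of the chain on $S$; by irreducibility $M$ is an irreducible nonnegative matrix. Moreover $e^{ct}P(t)=e^{tM}$ is entrywise strictly positive for every $t>0$ (the L\'evy dichotomy for continuous-time chains forces each $P_{ij}(t)$ to be either identically zero or everywhere positive, and irreducibility rules out the former), so $M$ is primitive. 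Assuming, as is implicit whenever one speaks of a transient process with a decay parameter, that the absorbing state $0$ is accessible from $S$, the matrix $P(t)$ is strictly sub-stochastic and has spectral radius $<1$; writing the Perron root of $M$ as $c-\lambda$ gives the decay parameter $\lambda>0$. Perron--Frobenius now supplies: $-\lambda$ is a simple eigenvalue of $Q$ whose real part strictly exceeds that of every other eigenvalue; it has a strictly positive left eigenvector $\bu$, normalised so $\sum_j u_j=1$, and a strictly positive right eigenvector $\bv$, normalised so $\bu\bv=1$; and up to scaling $\bu$ is the \emph{only} nonnegative left eigenvector of $Q$.

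Next I would check that $\bu$ is a QSD and the unique one. With $X(0)\sim\bu$ and $\bu Q=-\lambda\bu$ we get $\PP[X(t)=j,\,X(t)\in S]=(\bu P(t))_j=e^{-\lambda t}u_j$, whence $\PP[X(t)\in S]=e^{-\lambda t}$ and $\PP[X(t)=j\mid X(0)\sim\bu,\,X(t)\in S]=u_j$. Conversely, let $\bw$ be any QSD and set $a(t)=\PP[X(t)\in S\mid X(0)\sim\bw]$. The defining identity gives $\bw P(t)=a(t)\,\bw$ for all $t$, and conditioning on $X(s)$ shows $a(t+s)=a(t)a(s)$; since $a$ is continuous with $a(0)=1$ this forces $a(t)=e^{-\theta t}$, and differentiating $\bw P(t)=e^{-\theta t}\bw$ at $t=0$ yields $\bw Q=-\theta\bw$. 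Thus $\bw$ is a nonnegative left eigenvector of $Q$, so by the uniqueness above $\theta=\lambda$ and $\bw=\bu$.

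It remains to compute the $\nu$-LCD for an arbitrary proper distribution $\nu$ on $S$. Since $-\lambda$ is a simple dominant eigenvalue of $Q$, the spectral decomposition gives $e^{\lambda t}P(t)=\bv\,\bu+R(t)$ with $\bv\,\bu$ the rank-one spectral projection and $\|R(t)\|\to0$ (indeed exponentially fast, at the rate of the spectral gap). Writing $\mathbf{1}$ for the all-ones column vector,
\begin{align*}
\PP[X(t)=j\mid X(0)\sim\nu,\,X(t)\in S]
&=\frac{(\nu P(t))_j}{\nu P(t)\mathbf{1}}
=\frac{(\nu\bv)\,u_j+(\nu R(t))_j}{(\nu\bv)+\nu R(t)\mathbf{1}}\\
&\;\xrightarrow[t\to\infty]{}\;u_j,
\end{align*}
the denominator being bounded away from $0$ because $\bv$ is strictly positive and $\nu$ is a nonzero probability vector, so $\nu\bv>0$. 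Hence the $\nu$-LCD exists, equals $\bu$ for every $\nu$, and in particular coincides with the unique QSD.

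The crux --- and essentially the only place the hypotheses bite --- is the Perron--Frobenius input of the first paragraph: establishing that $P(t)$ is primitive for $t>0$ (which gives simplicity and isolation of the dominant eigenvalue, hence both the uniqueness of the nonnegative eigenvector and the geometric control of $R(t)$) and that $\lambda>0$ (which is exactly where genuine absorption is used). Everything downstream is linear algebra and bookkeeping with conditional probabilities. If one drops the absorption hypothesis the same argument runs with $\lambda=0$, and the QSD and LCD are then both the stationary distribution.
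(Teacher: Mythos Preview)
Your argument is correct and is precisely the classical Perron--Frobenius route: shift $Q$ to a nonnegative irreducible (indeed primitive) matrix, extract the simple dominant eigenvalue and its strictly positive left and right eigenvectors, verify directly that the left Perron vector is a QSD, show any QSD must be a nonnegative left eigenvector (via the multiplicative/exponential structure of the survival probability), and then read off the $\nu$-LCD from the rank-one spectral asymptotics of $e^{tQ}$. All steps are sound; the one implicit hypothesis you flag---that the absorbing state is genuinely accessible, so that $\lambda>0$---is exactly the right caveat.

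As for comparison: the paper does not give its own proof of this proposition. It simply states the result and cites \cite[Thm~3]{darroch1967}, where the Perron--Frobenius argument you have written out is essentially the original one. So your proposal is not a different route from the paper's; it \emph{is} the proof the paper defers to. If anything, your write-up is slightly more self-contained than the cited source in that you spell out the uniqueness-of-QSD step via the functional equation $a(t+s)=a(t)a(s)$ rather than appealing directly to the uniqueness of the Perron eigenvalue among those with nonnegative eigenvectors.
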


\edit{
\begin{remark}
  \begin{itemize}
    \item When $S$ is finite but reducible, it can be seen that the $\nu$-LCD depends on the communicating classes in which the support of $\nu$ is contained. \cite[Thm 7]{vandoorn2008}.
    \item For countable $S$, existence is not assured even with certain absorption, and uniqueness does not hold in general. For example, if the transition probabilities $P_{ij}(t)$ of an absorbing process $X(t)$ converge to zero only polynomially, then no QSD exists \cite[Thm 6]{vandoorn2013}.
  \end{itemize}
\end{remark}
}

Some \edit{works have} already been done into simulating QSDs using different methods. Groisman \& Jonckheere \cite{groisman2012} consider a renewal process where, instead of being absorbed, particles are redrawn immediately from a given distribution over the transient states. 
A supercritical multitype branching process is proposed, which can be used to simulate the QSD through the use of a Kesten-Stigum theorem. Another paper by \edit{Blanchet, Glynn and Zheng \cite{blanchet2013}} uses a different approach. A single particle is simulated to absorption, then a new particle is restarted from a location drawn from the distribution of states visited by the original particle, weighted by occupation time. They prove that as more particles are simulated, the distribution of occupation times converges to the LCD. \edit{However, these methods assume irreducibility, a condition we wish to go beyond in this paper.} In this work, we provide an alternative and potentially more efficient method, particularly for processes with reducible state spaces. 
The methods discussed in the rest of this paper hope to improve on the above papers by making more strategic choices of resampling in order to improve the variance of the estimators in the case of reducible state spaces (which are not discussed in the above) \edit{whilst making use of more computationally viable techniques} which do not require unbounded population sizes. Moreover, it allows the use of known results in Sequential Monte Carlo in order to prove convergence of our simulations in countable and potentially reducible state spaces.

\subsection{Motivating example: pure death process}\label{subsubsec: motive}
To illustrate the difficulties involved in simulating QSDs for processes on reducible state spaces, consider a pure death process $(X(t))_{t \geq 0}$ on $\{0,1,\dots,L\}$ which evolves according to a given sequence of death rates $\{\delta_i : i = 1, \dots, L\}$ shown in Figure \ref{fig: d30}.
The process jumps from state $i$ to state $i-1$ after an exponentially distributed waiting time with rate $\delta_i$. Once the process hits state $0$ it stays there for all future time -- it is \emph{absorbed}. 
This process inhabits a reducible state space in which each state is a communicating class. For this example, 
a single \edit{$L$-LCD exists} and can be calculated by solving the left eigenvector problem $\bu^{T}Q = -\alpha \bu^{T}$
 . In the pure death process, this simplifies to solving
\edit{
\begin{align*}
		- \min(\delta_j : 1 \leq j \leq L) u_{i} =  \delta_{i+1} u_{i+1} - \delta_i u_i  \qquad i=1, \dots, L
\end{align*}}
where we stipulate $\sum_{i=1}^L u_i = 1$ to ensure uniqueness. This will be expanded on in Section \ref{puredeath}.
We compare two approaches for simulating the LCD: (1) a rejection sampler and (2) the SMC sampler with two new resampling methods introduced in this paper: combine-split resampling and regional resampling. 
	
Figure \ref{fig: mn d30} shows the empirical distribution produced by each method. Under standard rejection sampling, the accepted particles don't provide a good estimate of the LCD starting from state 30, due to the large number of rejections (only 52 out of 3000 sample paths are not rejected), and the absence of particles present in states $\{20,\dots,30\}$. However in Figure \ref{fig: rr d30} the SMC sampler, which makes use of combine-split resampling (Section \ref{combine}) and regional resampling (Section \ref{regional}) performs much better.
	
\begin{figure}
	\centering
	\begin{subfigure}[b]{0.31\textwidth}
		\includegraphics[width=\textwidth]{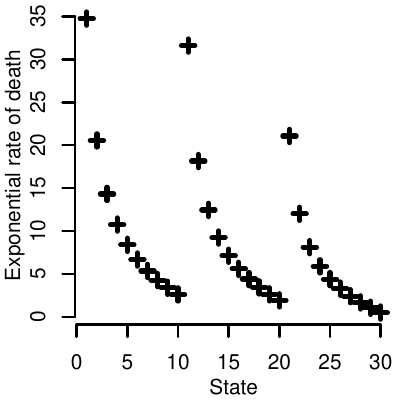}
		\caption{Death Rates of Process}
		\label{fig: true d30}
	\end{subfigure}\quad
	\begin{subfigure}[b]{0.31\textwidth}
		\includegraphics[width=\textwidth,page=2]{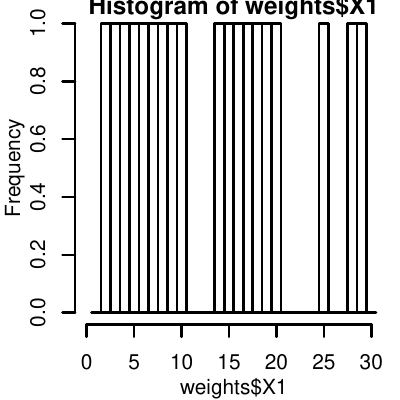}
		\caption{Rejection sampler}
		\label{fig: mn d30}
	\end{subfigure}\quad
	\begin{subfigure}[b]{0.31\textwidth}
		\includegraphics[width=\textwidth,page=2]{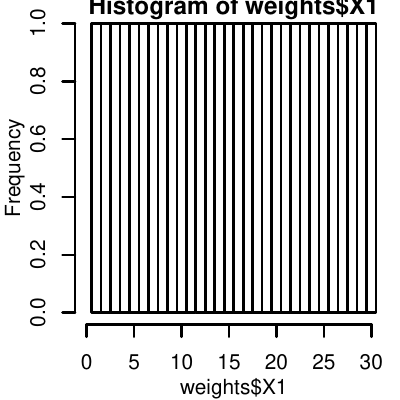}
		\caption{SMC sampler}
		\label{fig: rr d30}
	\end{subfigure}
	\caption{Simulated LCD for a pure death process using rejection sampler and SMC sampler. Both used $M=3000$ particles and simulated to $T_{\text{end}}=30$. The SMC sampler used combine-split and regional resampling with 1000 particles in each of the regions $S_1=\{1, \dots, 10\}$, $S_2=\{11, \dots, 20\}$, $S_3 = \{21, \dots, 30\}$, $T_{\text{b}} = 0$, $T_{\text{step}} = 5$, $T_{d}=30$.}
	\label{fig: d30}
\end{figure}	
In Section \ref{SMC} we introduce the SMC sampler as a method of simulating LCDs, discuss common resampling methods and apply SMC sampling to the Wright-Fisher model from population genetics. For an introduction to the SMC framework\edit{, see \cite{delmoral2004}.} Section \ref{combine} introduces a \emph{combine-split resampling} step to the SMC algorithm, which helps to avoid particle degeneracy when the state space is discrete. An application to the linear birth-death process shows that a combine-split resampling step can help the SMC to explore the tails of the distribution. 
Section \ref{regional} introduces a \emph{regional resampling} step to the SMC algorithm, which may be needed if the state space is reducible. This is illustrated using a pure death process and a transient immunity process -- a model for an emerging epidemic on a large population. In Section \ref{stopping} we discuss how to use resampling based on stopping-times to avoid a pitfall of the SMC sampler approach. 

\section{Sequential Monte Carlo samplers}\label{SMC}
\subsection{Definition} 
\emph{Sequential Monte Carlo Samplers} \cite{delmoral2006} 
provide a means to sample from a sequence of target measures $\{\pi_n: 1 \leq n \leq N \}$ over some common measurable space $(\edit{E}, \mathcal{F})$. Typically the target measures are only known up to some normalization constant $Z_n$ and so it common to work with the unnormalized measure $\edit{\gamma_n} = Z_n \pi_n$.

Sampling is done using a sequence of proposal measures $\{ \eta_n: 1 \leq n \leq N\}$ on $\edit{E}$. 
Given an initial proposal measure $\eta_1$, we construct subsequent proposal measures using
 $\eta_n(x_n) = \int_{\edit{E}} \eta_{n-1}(x_{n-1}) K_n(x_{n-1},x_n) dx_{n-1}$ 
for some sequence of Markov kernels $K_n: \edit{E} \times \mathcal{F} \rightarrow [0,1]$. 
Under naive importance sampling we would give each particle the unnormalized importance weight $w_n(x_n) = \edit{\gamma_n}(x_n)/\eta_n(x_n)$. However, such proposal distributions $\eta_n$ become very difficult to compute pointwise as $n$ increases, particularly if $\edit{E}$ is high-dimensional.
To tackle this we use the following SMC sampler as described in \cite{delmoral2006}. We define a sequence of artificial backwards-in-time Markov kernels $L_{n-1}(x_n,x_{n-1})$ then perform importance sampling using joint proposal distributions $\eta_n(\bx_{1:n})$ (where $\bx_{1:n} = (x_1, \dots, x_n)$) to estimate an artificial target joint distribution $\wtilde{\pi}_n(\bx_{1:n})$ on $\edit{E^n}$ defined by
\edit{
	$\wtilde{\pi}_n(\bx_{1:n}) := Z_n^{-1} \edit{\wtilde{\gamma}_n}(\bx_{1:n})$
	where
	$\edit{\wtilde{\gamma}_n}(\bx_{1:n}) := \edit{\gamma_n}(x_n) \prod_{k=1}^{n-1} L_k(x_{k+1}, x_k)$.
}
Our final target distribution $\pi_n(x_n)$ is a marginal of our artificial target by construction.
Assuming we can evaluate $\eta_1(x_1)$ and $\edit{\gamma_1}(x_1)$ to obtain unnormalized weights $w_1(x_1)=\frac{\edit{\gamma_1}(x_1)}{\eta_1(x_1)}$ then for each time-point $n$, we move the particles forward according to $K_n(x_{n-1},x_n)$. We then use importance sampling to approximate our artificial target $\wtilde{\pi}_n(\bx_{1:n})$ which gives unnormalized importance weights
\begin{align}\label{eqn: wn}
	w_n(\bx_{1:n}) &= w_{n-1}(\bx_{1:n-1}) \wtilde{w}_n(x_{n-1},x_n) \quad \text{for } n > 1 \\
	 \text{ where } \qquad \wtilde{w}_n(x_{n-1}, x_n) &= \frac{ \edit{\gamma_n}(x_n) L_{n-1}(x_n, x_{n-1})}{\edit{\gamma_{n-1}}(x_{n-1}) K_n(x_{n-1},x_n)} \nonumber.
\end{align}
We normalise the importance weights to get $W_n(\bx_{1:n})$. These weights can then be used to generate samples from the marginal distribution of interest $\pi_n(x_n)$.

One can combine the SMC sampler with a resampling scheme to reduce the effect of particle weight degeneracy, as typically, \edit{one will end up with particles of very high weight or very low weight, arising from an increase in the variance of the particle weights proven in \cite{kong1994}.} Such resampling schemes will be discussed in Section \ref{subsec: resample}. 

%
%
\subsection{Simulating LCDs using SMC samplers}
To implement the SMC sampler in order to simulate from a $\nu$-LCD, we would ideally take the $\nu$-LCD itself to be the target distribution. However, in general we will not be able to compute even the unnormalized density of the LCD, and so we cannot compute importance weights within an SMC sampler scheme. Instead, we use the time marginal of the process in question conditional on non-absorption given by $\pi_T(\cdot) = \PP[X(T) \in \cdot |  X(T) \in S, X(0)\sim \nu ]$, which converges to the true $\nu$-LCD as $T$ gets large. It is the time marginal $\pi_T$ which we will attempt to simulate using an SMC sampler. \edit{As such, the standard finite-time biases are accounted for, although one should note that there is generally bias in finite-particle simulations \cite{delmoral2004}.}

Define an increasing sequence of time points $\{ t_n: n=1, \dots, N \}$ with $t_1=0$ and $t_N=T$, and set the initial proposal distribution $\eta_1$ to be $\nu$. When simulating analytically intractable LCDs we cannot even work with an unnormalized target $\edit{\gamma_n}(x)$, and so we might as well set $\pi_n(x) = \edit{\gamma_n}(x)$ (i.e. $Z_n=1$).
We construct the sequence of proposal distributions $\eta_n(\cdot) = \PP[X(t_n) \in \cdot | X(t_1) \sim \eta_1]$ using the Markov transition kernel $K_n(x_{n-1},x_n) = \PP[X(t_n) = x_n | X(t_{n-1}) = x_{n-1}]$ and the backward kernel
\begin{align*}
	L_{n-1}(x_n, x_{n-1}) &= \frac{\PP[X(t_{n-1})=x_{n-1}]}{\PP[X(t_n)=x_n]} \PP[X(t_n)=x_n | X(t_{n-1})=x_{n-1}].
\end{align*}
This precisely matches the optimal choice of backward kernel given by \cite{delmoral2006}, which minimizes the variance of the unnormalized importance weights $w(\bx_{1:n})$.
The forward kernel corresponds to the ``bootstrap'' choice which ignores the information that absorption does not occur. Superior choices may be possible for specific models, but this is a flexible and easily implementable kernel, and one for which the optimal backward kernel is available. Expressing our simulation problem as a generic SMC algorithm gives us access to standard convergence results; see for example \cite{delmoral2004}.


\begin{prop}\label{prop}
	If $\nu$ is supported on $S$, then 
	\edit{$w_n(x_n) = \edit{{\PP[X(t_n) \in S]}^{-1}}$ for $x_n \in S$ and zero otherwise.}
\end{prop}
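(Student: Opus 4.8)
The plan is to prove the identity by induction on $n$ from the weight recursion \eqref{eqn: wn}, using the explicit forward kernel, the optimal backward kernel, and the target $\gamma_n = \pi_n$ introduced above; a telescoping cancellation does essentially all the work. All probabilities below are taken under $X(0)\sim\nu$, and since $\nu$ is supported on $S$ we have $\PP[X(t_1)\in S]=1$.

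For the base case $n=1$: here $\eta_1=\nu$ and $t_1=0$, while $\gamma_1=\pi_1$ is $\nu$ conditioned on $\{X(0)\in S\}$, which is just $\nu$ because $\nu$ is supported on $S$. Hence $w_1(x_1)=\gamma_1(x_1)/\eta_1(x_1)=1=\PP[X(t_1)\in S]^{-1}$ for $x_1\in S$, and $w_1(x_1)=0$ otherwise.

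For the inductive step I would substitute $\gamma_n(x_n)=\PP[X(t_n)=x_n]/\PP[X(t_n)\in S]$, the bootstrap kernel $K_n(x_{n-1},x_n)=\PP[X(t_n)=x_n\mid X(t_{n-1})=x_{n-1}]$, and the stated optimal $L_{n-1}$ into $\widetilde w_n(x_{n-1},x_n)$. The one-step transition probability and both unconditional marginals cancel, leaving $\widetilde w_n(x_{n-1},x_n)=\PP[X(t_{n-1})\in S]/\PP[X(t_n)\in S]$, which depends on neither state (for $x_{n-1}\in S$, so that the denominator is nonzero). Feeding this into \eqref{eqn: wn} then yields a telescoping product equal to $\PP[X(t_1)\in S]/\PP[X(t_n)\in S]=\PP[X(t_n)\in S]^{-1}$; in particular the normalized weights come out uniform, as one expects since the forward kernel propagates particles exactly according to the process law.

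It remains to treat $x_n\notin S$. Because $0$ is absorbing, at the first index $k$ with $x_k=0$ one has $x_{k-1}\in S$ but $\gamma_k(x_k)=\pi_k(0)=0$, so $\widetilde w_k=0$ and hence $w_n=0$ for all later $n$; thus $w_n(x_n)=0$ whenever $x_n\notin S$. I expect the only delicate point to be purely organizational: keeping track of which probabilities carry the conditioning event $\{X(t_k)\in S\}$ at each stage, and noting that absorbed particles genuinely carry zero weight, so that no $0/0$ indeterminacy arises in the recursion.
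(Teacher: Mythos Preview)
Your argument is correct and follows essentially the same route as the paper: compute the incremental weight $\wtilde w_n$ by substituting $\pi_n$, $K_n$, and the optimal $L_{n-1}$, observe the cancellations, and telescope down to $w_1\cdot\PP[X(t_n)\in S]^{-1}$. The only difference is cosmetic---you simplify $\wtilde w_n$ to $\PP[X(t_{n-1})\in S]/\PP[X(t_n)\in S]$ before telescoping whereas the paper defers that simplification to the end---and your explicit treatment of the absorbed case $x_n\notin S$ is more careful than the paper's.
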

\begin{proof}
	Since $\eta_1=\nu$, for $n=1$ we have that $w_1(x_1)=\frac{\pi_1(x_1)}{\eta_1(x_1)}=\frac{\PP[X(t_1)=x|X(t_1)\in S, X(t_1)\sim\nu]}{\eta_1(x)}=1$ and so the particles begin with equal weights. For $n>1$, if we substitute the expressions for $L_n, K_n, \eta_n$ and $\pi_n$ into the incremental weight $\wtilde{w}_n$ (suppressing the conditioning on $X(t_1)\sim\nu$ for brevity) we see that
	\begin{align*}
		\wtilde{w}_n(x_{n-1}, x_n) &= \frac{ \pi_n(x_n) L_{n-1}(x_n, x_{n-1})}{\pi_{n-1}(x_{n-1}) K_n(x_{n-1},x_n)} \nonumber\\
		&= \frac{ \pi_n(x_n) }{ \pi_{n-1}(x_{n-1})} \frac{\PP[X(t_{n-1})=x_{n-1}]}{\PP[X(t_n)=x_n]} \frac{\PP[X(t_n)=x_n | X(t_{n-1})=x_{n-1}]}{\PP[X(t_n)=x_n | X(t_{n-1})=x_{n-1}]} \nonumber\\
		&= \frac{ \PP[X(t_n) = x_n | X(t_n) \in S] }{\PP[X(t_n)=x_n]}\frac{\PP[X(t_{n-1})=x_{n-1}]}{\PP[X(t_{n-1}) = x_{n-1} | X(t_{n-1}) \in S] }. \nonumber 
	\end{align*}
	Substituting this incremental weight into the full weight \eqref{eqn: wn} gives a telescoping product which reduces to
	\begin{align*}
		w_n(x_n) &= \frac{ \PP[X(t_n) = x_n | X(t_n) \in S] }{\PP[X(t_n)=x_n]}w_1(x_1).
	\end{align*}
	\edit{The result follows immediately by the definition of conditional probability.}
	$\qedsymbol$
\end{proof}

Proposition \ref{prop} shows that simulating the $\nu$-LCD via SMC sampling works in a similar way to rejection sampling, where only the non-absorbed particles are considered. The existence of a LCD requires the certain absorption of each particle in a finite amount of time and so there remains a problem balancing the approximation of the LCD (improved by increasing $T$) and particle depletion (worsened by increasing $T$). However setting our algorithm within an SMC framework allows us to draw on existing tools to prevent particle depletion, such as particle resampling.

\subsection{Particle resampling methods}\label{subsec: resample}
Without a resampling scheme, standard SMC methods suffer from what is referred to in the literature as \emph{particle weight degeneracy} where, when a simulation is run long enough, one will end up with a single particle with nearly all the weight, and many low-weight particles. This results in a low effective sample size and a poor estimate of the target distribution. 

To implement a resampling scheme, we define a sequence of resampling timepoints $\{ \tau_1, \tau_2, \dots, \tau_k\}$, either deterministically or drawn randomly according to some given distribution. At each resampling timepoint, we redraw $M'$ particles with replacement from a pool of $M$ particles $\{ (X_j, W_j): j=1, \dots, M\}$ with normalized weights. Examples of these include \emph{Multinomial Resampling} \cite{liu1995} and \emph{Residual Resampling} \cite{liu1998}.

\begin{remark}
One should note that in 
these resampling methods, one will typically draw many particles from the high-weighted locations, and very few if any from low weighted locations. For the purposes of exploring the tails of the distribution, this is not particularly desirable. 
Specifically, under standard resampling methods, one would expect, in order to look at the top $1 \%$ of the distribution, one would require many more than 100 equally weighted particles, due to the resampling having a very low chance of selecting such particles. To this end, we developed resampling methods which maintain high levels of particle diversity.
\end{remark}

\subsection{Particle refilling}\label{subsec: refilling}
To obtain approximations of LCDs using SMC samplers, we have shown that we need only simulate the unconditioned process and give uniform non-zero weight to all the non-absorbed particles. A simple approach aimed at maintaining particle diversity, which we refer to as \emph{particle refilling}, is to resample only those particles which have been absorbed. If $A$ particles have been absorbed then we replace these particles with a sample drawn from the non-absorbed particles using one of the existing resampling mechanisms described above, with $M'=A$. 
Since each non-absorbed particle is a draw from the process conditional on non-absorption, one can intuitively see that \edit{there is no gain by resampling such particles}. If applied to all $M$ particles, \edit{any valid resampling method} can potentially replace non-absorbed particles, which reduces particle diversity and hinders the estimation of the tails of a LCD. Particle refilling maintains particle diversity, since there is no chance of removing any non-absorbed particles. 

We next show that particle refilling is valid in the sense that a properly weighted sample is still properly weighted after refilling. 
\edit{
\begin{defn}\label{def: proper}
A set of weighted random samples $\{(X_j,w_j):1\leq j \leq M\}$ is called \emph{proper} with respect to $\pi$ if for any square integrable function $h(\cdot)$ we have 
$\EE[h(X_j)w_j]=c \EE_{\pi}[h(X)]$  for $j=1,\dots,M$,
where $c$ is a normalising constant common to all $M$ samples. (Chapter 10 of \cite{doucet2001})
\end{defn}
}
\begin{prop}\label{refillingproper}
Given a properly weighted sample $\{(X_j,w_j):1\leq j \leq M\}$ as in Definition \ref{def: proper} and a resampling method that produces properly weighted samples then particle refilling
produces properly weighted samples.
\end{prop}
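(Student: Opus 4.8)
The plan is to reduce particle refilling to an ordinary resampling step applied to a sub-sample that is itself properly weighted. First I would invoke Proposition \ref{prop}: at any resampling time every absorbed particle carries weight zero and every surviving particle carries the common weight $\PP[X(t_n)\in S]^{-1}$; moreover each surviving particle, conditioned on survival, is a draw from the target $\pi$, since the unconditioned proposal $\eta_n$ restricted to $S$ is exactly $\pi$. Thus particle refilling amounts to keeping the survivors unchanged and replacing each absorbed particle by an output of the chosen resampling mechanism applied to the pool of survivors, and the goal (Definition \ref{def: proper}) is to show $\EE[h(\tilde X_j)\tilde w_j] = c\,\EE_\pi[h(X)]$ for every index $j$, with one common constant $c$.

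For a fixed index $j$ I would split the expectation according to whether particle $j$ survived. On the survival event $\{X_j\in S\}$ we have $\tilde X_j = X_j$ and $\tilde w_j = w_j$, so this contribution equals $\EE[h(X_j)w_j\1_{X_j\in S}] = \EE[h(X_j)w_j]$ (the last equality because $w_j$ vanishes off $S$), which is $c\,\EE_\pi[h(X)]$ by the assumed properness of the input sample. On the absorption event $\tilde X_j$ is produced by the resampling mechanism from the survivor pool; by the first paragraph that pool is a properly weighted sample for $\pi$, so the hypothesis that the mechanism maps proper samples to proper samples makes this contribution a constant multiple of $\EE_\pi[h(X)]$ as well. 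Adding the two pieces, $\EE[h(\tilde X_j)\tilde w_j]$ is a constant multiple of $\EE_\pi[h(X)]$; I would then argue the constant does not depend on $j$ by symmetry among the particle labels (the particles being exchangeable, so the survival probability and the survivor pool are common to all indices), together with the common post-refill weight.

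The hard part will be making rigorous the claim that ``the survivor pool is a proper sample for $\pi$'', because the pool has a random cardinality $M-A$ and a random index set, so one cannot literally feed it to the resampling hypothesis without care; I would handle this either by the per-index reformulation $\EE[h(X_j)w_j\1_{X_j\in S}] = c\,\EE_\pi[h(X)]$ used above, or by conditioning on the input sample and using a conditional version of Definition \ref{def: proper}. A second, more routine point is the weight bookkeeping: one must check the resampling mechanism assigns the refilled particles a weight compatible with the survivors' weight (for instance the pool's mean weight, as with multinomial or residual resampling), so that a single normalizing constant serves all $M$ particles; in the equal-weight setting of Proposition \ref{prop} this is automatic, and indeed there any resampling scheme only ever returns members of the survivor pool, so the hypothesis on the mechanism is really being used just to control these output weights.
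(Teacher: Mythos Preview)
Your approach is essentially the paper's: split $\EE[h(\tilde X_j)\tilde w_j]$ according to whether particle $j$ was absorbed, use the input's properness on the survival piece and the resampling hypothesis on the absorption piece, then appeal to exchangeability for the common constant. The paper does this in one line, writing $\EE[h(X_j')w_j']=\EE[h(X_K)w_K]\PP(X_j=0)+\EE[h(X_j)w_j]\PP(X_j\neq 0)=c\,\EE_{\pi}[h(X)]$ with $K$ the resampled index.

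Two minor remarks. First, your invocation of Proposition~\ref{prop} is not needed: the statement of Proposition~\ref{refillingproper} is for an arbitrary proper sample, and the paper's argument works at that generality using only that $w_j=0$ on $\{X_j=0\}$; you can drop the equal-weight discussion and the claim that survivors are marginally $\pi$-draws. Second, your handling of the survival piece via $\EE[h(X_j)w_j\1_{X_j\in S}]=\EE[h(X_j)w_j]$ is in fact cleaner than the paper's display, and your caveats about the random cardinality of the survivor pool and the weight bookkeeping are genuine subtleties that the paper's one-line proof simply glosses over; your suggestion to handle them by conditioning on the input sample is the right instinct.
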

\begin{proof}
Let $(X_j',w_j')$ denote the particle location and weight after particle refilling. Then conditioning on $X_j=0$ (or equivalently $w_j=0$) yields $\EE[h(X_j')w_j']=\EE[h(X_K)w_K]\PP(X_j=0)+\EE[h(X_j)w_j]\PP(X_j\neq 0)=c\EE_{\pi}[h(X)]$, where $K$ is the index of the resampled particle randomly chosen from the non-zero weight particles using the given resampling method.  $\qedsymbol$
\end{proof}

\subsection{Taking multiple samples}
When simulating LCDs using SMC samplers, we can take advantage of the fact that after a suitable burn-in period, every target distribution $\pi_n$ is an approximation of the true LCD. As a result, we borrow ideas from Markov Chain Monte Carlo and draw samples from many timepoints after the SMC sampler has reached stationarity and not just the final timepoint. We adopt a burn-in period $T_{b}$, during which the samples are discarded. To reduce auto-correlation between samples we also incorporate thinning, in the form of a delay $T_d$ between sampling times. 
	
\subsection{Example: Wright-Fisher model}
In order to demonstrate the SMC sampler, we apply it to simulate the QSD of the discrete-time Wright-Fisher process. Population genetics forms an important application area for QSDs; see \cite{lambert2008} for an overview covering both the Wright-Fisher model and QSDs. In this context the absorbing states correspond to the loss or fixation of a mutation in a large population, yet genetic variation in a modern population is---by definition---the collection of those mutations for which absorption has not occurred in the time up to the present day.

Consider a population of $D$ haploid individuals each with an allelic type from a set of 2 types denoted $\{1,2\}$. At each timepoint $n$, we generate a new set of $D$ offspring, which will form the population of the next generation. If the allelic types confer no advantage, then each offspring independently chooses a parent uniformly at random and adopts the allelic type of the parent. More generally, each allelic type $k$ is assigned a \emph{selection coefficient} $s_k \geq 0$ where each offspring selects a particular parent with type $k$ with probability proportional to $s_k + 1$. This \emph{Wright-Fisher with selection} process \cite{etheridge2011} evolves over a finite state space of 
$2^D$ states.
%

\subsubsection{2-type Wright-Fisher model.}
Consider the LCD for the Wright-Fisher process with selection conditioning on the event that both types remain in the population. One can use results from \cite{vandoorn2013} to find the true LCD as a left eigenvector of the transition matrix, however for large population sizes, this is numerically demanding. Here we use a small enough population to generate the true LCD for illustrative purposes. 
	
We compare the SMC model with multinomial refilling to the basic rejection resampling algorithm. Here we set resampling to occur every 5 timepoints; henceforth, this interval will be given by $T_{\text{step}}$. In Figure \ref{fig: SMC wf2 parts}, the main problem with rejection sampling becomes evident: the number of accepted particles decreases rapidly through time. In comparison, the SMC sampler replenishes the particles at each resampling step. Figure \ref{fig: SMC wf2 hist} shows the estimated QSD of the number of individuals of type 1 from the two methods.
\begin{figure}
	\centering
	\includegraphics[width=0.45\textwidth]{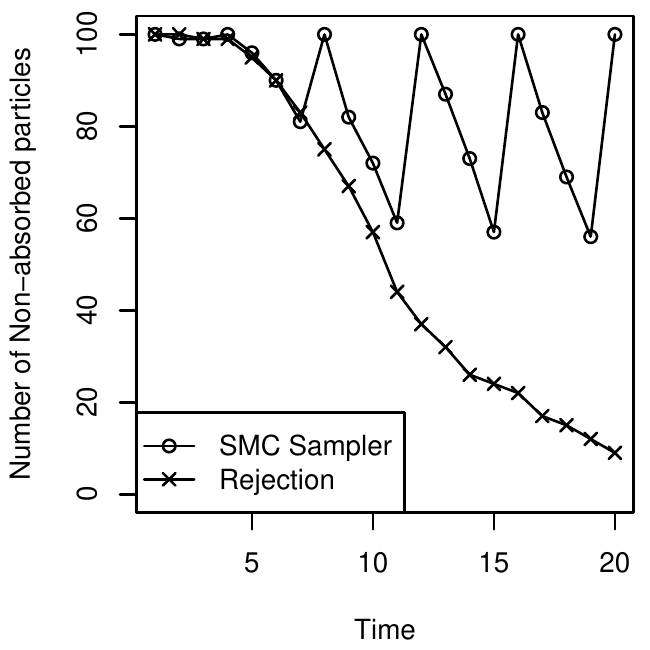}
	\caption{Comparison of the number of non-absorbed particles in the SMC sampler and rejection sampler; $D=20$, $\mathbf{s}= (0, 0.1)$, $M=100$.}
	\label{fig: SMC wf2 parts}
\end{figure}
\begin{figure}
	\centering
	\begin{subfigure}[b]{0.45\textwidth}
		\includegraphics[width=\textwidth]{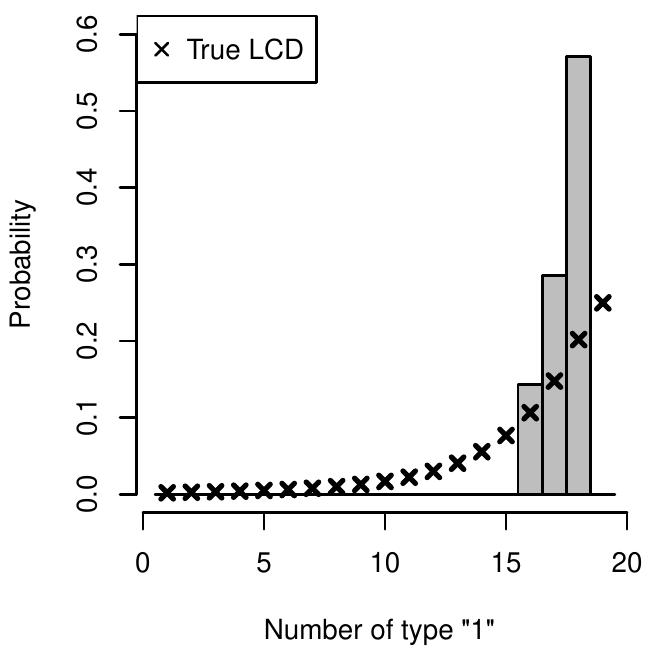}
		\caption{Rejection sampler}
		\label{fig: rejwf}
	\end{subfigure}\quad
	\begin{subfigure}[b]{0.45\textwidth}
		\includegraphics[width=\textwidth]{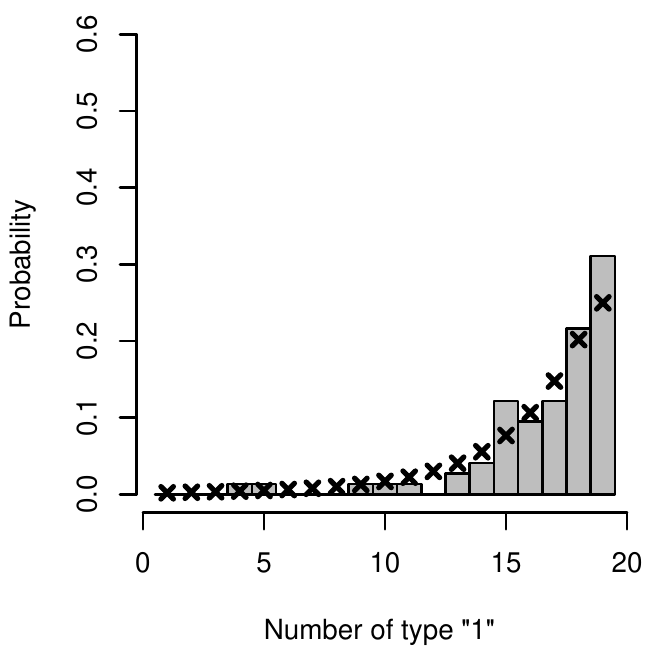}
		\caption{SMC sampler}
		\label{fig: smcwf}
	\end{subfigure}
	\caption{LCD for a Wright-Fisher process with selection, simulated using a rejection sampler and an SMC sampler; $D=20$, $M=100$, Simulation time $T_{\text{end}}=20$, Resampling interval $T_{\text{step}} = 5$, $\mathbf{s}=(0, 0.1)$.}
	\label{fig: SMC wf2 hist}
\end{figure}
\section{Combine-split resampling}\label{combine}
\subsection{Idea and algorithm}
Traditional SMC resampling schemes are designed for particles evolving in continuous state spaces where no single point has strictly positive mass. In discrete state spaces it is likely that several particles share the same location. We propose the \emph{Combine-Split} resampling method which redistributes particles within the space without moving any of the weight, with the aim of improving the effective sample size.

Combine-split resampling comprises 3 steps. First, at each state $s \in S$ we combine together all the particles which are at that location $s$ into a single particle and give it the combined weight of all the particles that were sitting there. Next, all non-assigned particles (those lost in the combining and the absorbed particles) are distributed amongst the locations with non-zero weight according to some chosen distribution, and are assigned temporary weight zero. Finally all of the particles now residing at a given location are given weight equal to the total weight at that location divided by the number of particles there. The combine-split resampling algorithm is given in Algorithm \ref{alg: comb split}. The weight at each state remains constant during combine-split resampling; the particles are simply redistributed amongst those states.

\begin{prop}\label{combineSplitProper}
Given a properly weighted sample \edit{as in Definition \ref{def: proper}} $\{(X_j,w_j):1\leq j\leq M\}$, combine-split resampling produces a properly weighted sample.
\end{prop}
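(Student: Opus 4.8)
The plan is to reduce the claim to the single structural fact, already noted in the text, that combine-split resampling keeps the total weight $V_s := \sum_{j:\,X_j=s} w_j$ at each state $s$ fixed; indeed it keeps the whole weighted empirical measure $\sum_j w_j\,\delta_{X_j}$ unchanged \emph{pathwise}, after which Definition \ref{def: proper} follows by a one-line averaging argument in the spirit of the proof of Proposition \ref{refillingproper}. Write $N_s' := \#\{j : X_j' = s\}$ for the post-resampling occupation counts, and recall that $N_s' \geq 1$ precisely on the states carrying positive weight, and that every output particle sitting at such a state is assigned weight $V_s / N_s'$.

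The first step is the pathwise identity: for any square integrable $h$,
\[
\sum_{j=1}^M h(X_j')\,w_j' \;=\; \sum_{s} \sum_{j:\,X_j'=s} h(s)\,\frac{V_s}{N_s'} \;=\; \sum_{s} h(s)\,V_s \;=\; \sum_{i=1}^M h(X_i)\,w_i,
\]
where the count $N_s'$ cancels against the $N_s'$ output particles sitting at state $s$, so that all the randomness introduced by the ``chosen distribution'' used to scatter the freed and absorbed particles simply disappears. Taking expectations and invoking properness of the input sample then gives $\sum_{j=1}^M \EE[h(X_j')\,w_j'] = \sum_{i=1}^M \EE[h(X_i)\,w_i] = M\,c\,\EE_{\pi}[h(X)]$.

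It remains to pass from this sum to the individual terms. If the $M$ output labels are attached to the $M$ output slots uniformly at random --- a convention I would make explicit as part of the algorithm --- then the pairs $(X_j',w_j')$ are exchangeable, so $\EE[h(X_j')w_j']$ is independent of $j$ and therefore equals $\tfrac{1}{M}\sum_{i} \EE[h(X_i)w_i] = c\,\EE_{\pi}[h(X)]$, with exactly the same normalising constant $c$ as for the input sample. This is Definition \ref{def: proper} for the resampled sample.

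The main obstacle is bookkeeping rather than analysis. Combining genuinely destroys the identity of individual particles (unlike particle refilling, where the non-absorbed particles are left untouched), so one must say carefully what ``particle $j$ after resampling'' means, and must fix the uniform-relabelling convention in order for the per-index formulation of properness to be the natural target. It is also worth spelling out that the split-step distribution may depend on the current weighted sample without harm, because the cancellation of $N_s'$ in the display above occurs before any expectation over that distribution is taken. Finally, the degenerate case in which every particle has been absorbed --- so that there is no positive-weight state to split onto --- should be excluded by convention or handled separately.
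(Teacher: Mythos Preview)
Your proof is correct and rests on exactly the same structural fact as the paper: combine-split preserves the total weight at every location, so the weighted empirical measure $\sum_j w_j\,\delta_{X_j}$ is unchanged pathwise. The paper's own proof is a single line passing directly from this to $\EE[h(X_j')w_j']=\EE[h(X_j)w_j]$ without addressing the per-index labelling issue you flag; your uniform-relabelling/exchangeability step simply makes explicit what the paper leaves implicit, so the approaches coincide with yours being the more careful of the two.
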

\begin{proof}
Let $\{(X_j',w_j')\}$ denote the locations and weights after combine-split resampling. Since combine-split does not change total weight at any location, we must have $\EE[h(X_j')w_j']=\EE[h(X_j)w_j]=c\EE_\pi[h(X)]$, so the new sample is properly weighted. $\qedsymbol$
\end{proof}
\begin{spacing}{0.8}
\begin{algorithm}[ht]
	\caption{Combine-split resampling}
	\label{alg: comb split}
	\begin{algorithmic}[1]
		\REQUIRE $M \geq 1$, $\{(X_i, W_i): 1 \leq i \leq M \}$ normalized weighted particles.
		\STATE For each $s \in S$ let $a_s = \sum_{i=1}^M W_i \ind_{\{X_i = s\}}$ and $S^*=\{ s \in S : a_s > 0\}$.
		\STATE For each $s \in S^*$ let one particle $X_i = s$ have weight $a_s$. Give all other particles zero weight.
		\STATE For each zero-weight particle, draw a new position from some specified distribution over $S^*$. \label{line:specified}
		\STATE For each $s \in S^*$ let $N_s = |\{i: X_i = s\}|$.
		\STATE For $i = 1, \dots, M$ set $W_i = a_s/N_s$ where $s = X_i$.
		\RETURN $\{(X_i, W_i): 1 \leq i \leq M \}$ normalized weighted particles
	\end{algorithmic}
\end{algorithm}
\end{spacing}
\renewcommand{\arraystretch}{1.0}
Combine-split resampling bears some superficial resemblance to existing resampling mechanisms. It has long been recognized that particles can be resampled according to criteria other than their importance weight, such as the \emph{lookahead} methods reviewed in \cite{linetal2013}. These attempt to predict the utility of a particle to future target distributions via additional Monte Carlo simulation. Our (as-yet unspecified) distribution in \edit{step} \ref{line:specified} of Algorithm \ref{alg: comb split} could be used to do a similar job, though in a non-random manner. 
\edit{However, a key difference with standard resampling algorithms is that, because particle diversity is at such a premium, here resampling is focused on the support $S^*$ of the particles rather than the particles themselves.}
Note that standard lookahead methods, and indeed all of the other resampling mechanisms described in Section \ref{subsec: resample}, can lose particle locations and therefore reduce the diversity of the particles, whereas our method guarantees that no locations are lost. \edit{Since this is a valid resampling method in that its output is properly weighted, it inherits many convergence properties of SMC \cite{douc2008}. Additionally, using a deterministic resampling timepoint sequence allows very easy parallelization (through, for example, MapReplace or Pregel) of this SMC sampler with combine-split resampling, since particles interact only at resampling timepoints.}

\subsection{Example: Combine-split resampling step}
	Suppose that there are eight particles with locations and weights as given in rows 2 and 3 respectively of Table \ref{tab: cs1}, of which two have been absorbed into state zero. The combine step (rows 4 and 5) moves the weight at locations a, b and c to a single particle at each location, leaving 3 extra particles to reallocate (5 in total). Suppose that we reallocate these 5 particles by drawing uniformly at random from the 3 locations, giving $(a,b,b,c,c)$, as seen in rows 6 and 7. Finally, the \emph{Split} step equalises the weight at each location, as shown in rows 8 and 9.
	\begin{spacing}{0.8}
	\begin{table}
		\begin{center}
			\begin{tabular}{cccccccccc}\hline
				Particle name& $X_1$ & $X_2$ & $X_3$ & $X_4$ & $X_5$ & $X_6$ & $X_7$ & $X_8$\\\hline
				Initial Location & a & a & a & b & b & c & 0 & 0\\
				Initial Weight & 1 & 1 & 2 & 1 & 4 & 2 & 0 & 0\\\hline
				Combined Location & a & 0 & 0 & b & 0 & c & 0 & 0 \\
				Combined Weight & 4 & 0 & 0 & 5 & 0 & 2 & 0 & 0\\\hline
				Reallocated Location & a & a & b & b & b & c & c & c\\
				Reallocated Weight & 4 & 0 & 0 & 5 & 0 & 2 & 0 & 0\\\hline
				Split Location & a & a & b & b & b & c & c & c\\
				Split Weight & 2 & 2 & 5/3 & 5/3 & 5/3 & 2/3 & 2/3 & 2/3\\\hline
			\end{tabular}
		\end{center}
		\caption{\label{tab: cs1}Particles' locations and weights during combine-split resampling step.}				
	\end{table}
\end{spacing}
\subsection{Example: Linear birth-death process}
The linear birth-death process $\{I(t):t\geq 0\}$ is a continuous-time Markov process on $\NN_0$ which represents the size of a population of individuals subject to births and deaths. Individuals give birth independently at points of a Poisson Process with rate $\beta$ and live for an exponentially-distributed lifetime with rate $\gamma$. Once the population size hits zero no more births are possible and so zero is an absorbing state. The transient states $S=\{1, 2, \dots\}$ form a single communicating class. At least one QSD exists if $\gamma > \beta > 0$ and there is a unique $\nu$-LCD for any initial distribution $\nu$ with finite mean: the geometric distribution with parameter $\beta/\gamma$ \cite{vandoorn2013}.

In this example we compare the true LCD for the linear birth-death process with the simulated LCDs produced using an SMC sampler with multinomial refilling and with combine-split resampling. In the combine-split resampling step, the zero-weight particles were reallocated to locations drawn uniformly at random from the existing locations, which sends more particles to the tail than reallocating proportionally to the weights. Figure \ref{fig: cs bdp} shows the true and estimated LCDs. It is clear that the SMC with combine-split resampling reaches further into the tail than SMC with multinomial refilling, and hence matches the true LCD more closely. 
In Figure \ref{fig: mean bdp}, we see the cumulative mean of the particles observed at each resampling step. The cumulative mean under combine-split is consistently higher than under multinomial refilling and closer to the true mean. This suggests faster convergence to the true LCD and a reduction in the finite sample bias inherent in particle approximation 
methods.

\begin{figure}
	\centering
	\begin{subfigure}[b]{0.45\textwidth}
		\includegraphics[width=\textwidth]{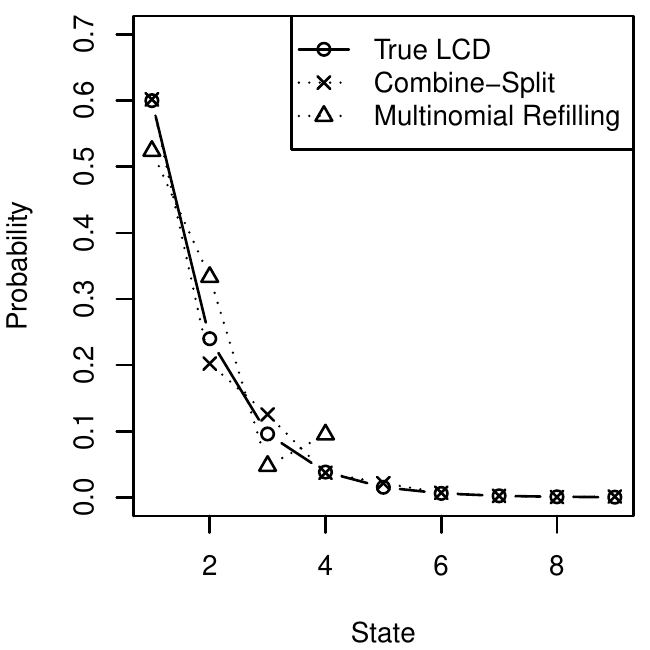}
		\caption{True and estimated LCDs}
		\label{fig: cs bdp}
	\end{subfigure}\quad
	\begin{subfigure}[b]{0.45\textwidth}
		\includegraphics[width=\textwidth]{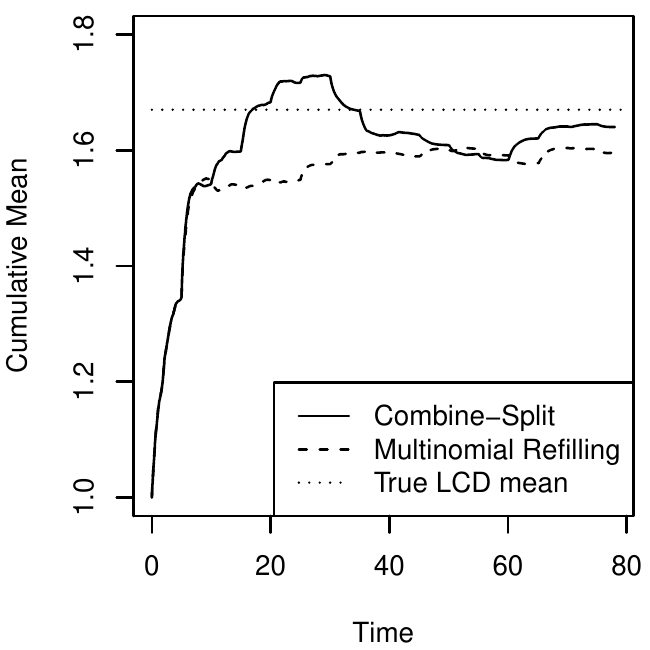}
		\caption{Cumulative mean of particles}
		\label{fig: mean bdp}
	\end{subfigure}
	\caption{The LCD for a linear birth-death process: $\beta=0.4$, $\gamma=1$, $M=100$, simulation time $T_{\text{end}}=80$, resampling interval $T_{\text{step}}=4$, burn-in $T_b=40$, sampling delay $T_d=2$.} 
	\label{fig: lin bdp}
\end{figure}


\section{Regional resampling}\label{regional}
\subsection{Idea and algorithm}
One difficulty with SMC sampling on a reducible state space is that once all the particles have left a transient communicating class there is no mechanism for particles to return there. Since the support of the initial distribution determines which of the LCDs is being estimated, this changes the LCD that the SMC sampler is converging towards so that it is no longer the required target distribution. To address this weakness, we propose \emph{regional resampling} in which the state space is partitioned into
regions and the number of particles available to explore each region is stipulated in advance. At each resampling timepoint, particles are removed from regions with too many particles and reallocated to regions with too few. 
We describe regional resampling in the context of LCDs, we anticipate that it will have wider applications in more general SMC schemes.
\edit{Intuitively, this can be seen as just a redistribution of a resampling step using existing methods between different sections, so the resampling holds as expected locally, and the preservation of region weight allows it to hold globally.}

Let $X = \{X(t):t\geq 0\}$ be an absorbing Markov process on a state space \mbox{$S \cup \{0 \}$}, with absorbing state $0$. We partition the transient states into $L$ \edit{further regions:} $S = \bigcup_{l=1}^L S_l$. For each region $l=1,\dots,L$ we specify $N_1, N_2, \dots, N_L > 0$ to be the desired number of particles in each region, such that $\sum_{l=1}^L N_l = M$. 

Each resampling step begins with a set of weighted particles $\{(X_i, W_i): i=1, \dots, M\}$. Any absorbed particles have weight zero. Let $M_l(t)$ be the number of particles in region $l$ at  resampling time $t$ and define $W(l)$ to be the total weight in region $l$; $W(l)=\sum_{i=1}^MW_i\1_{\{X_i\in S_l\}}$. 

In regional resampling, we resample particles in each region separately. In region $l$ we draw $N_l$ particles from the $M_l(t)$ existing particles in that region using any resampling algorithm (such as combine-split, particle refilling or an existing algorithm as described in \ref{subsec: resample} with $M'=N_l$). 
The weights of these new particles are then renormalized so that the total weight in region $l$ remains equal to $W(l)$. For example, after multinomial resampling within region $S_l$, the renormalized weights of each particle would be $W(l)/N_l$. \edit{It should be noted that between two resampling steps it is possible that a region may run out of particles and so resampling may fail. This problem will be tackled using stopping times in Section \ref{stopping}.}

In sampling from LCDs we expect that $N_l\geq M_l(t)$ in most cases, because some of the particles will have been absorbed. However it is possible that in some regions $N_l<M_l(t)$ and so if we are applying combine-split or particle refilling it may not be possible to keep all of the locations. 
It is therefore necessary to apply an alternative resampling step such as multinomial resampling in these instances. 




It should be noted that one does not need to choose $S_l$ to be a single transient communicating class. This method can be extended to any \edit{complete partition $\{S_l\}_{l=1}^L$ of $S$.}
\edit{As for combine-split resampling, this inherits many convergence properties of SMC samplers, and using a deterministic resampling timepoint sequence allows for very effective parallelization of this SMC sampler.}
\begin{prop}\label{regionalProper}
Given a properly weighted sample $\{(X_j,w_j):1\leq j \leq M\}$ then regional resampling, using a resampling method that produces properly weighted samples within each region, produces properly weighted samples.
\end{prop}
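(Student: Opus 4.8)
The plan is to argue region by region, which is essentially the content of the informal remark preceding the statement: inside a single region, regional resampling is merely an application of a resampling method already assumed to preserve properness, and because the total weight carried by each region is left unchanged by construction, these local guarantees glue into a global one.

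First I would fix an arbitrary square-integrable test function $h$ and write $\EE_\pi[h(X)] = \sum_{l=1}^L \EE_\pi\big[h(X)\1_{\{X\in S_l\}}\big]$, together with the corresponding split of the incoming sample into its sub-samples $\{(X_i,w_i): X_i\in S_l\}$, $l=1,\dots,L$. The first step is to observe that each such sub-sample is itself properly weighted for the restriction of $\pi$ to $S_l$: applying the hypothesis of Definition \ref{def: proper} to the test function $h(\cdot)\1_{\{\cdot\in S_l\}}$ gives $\EE\big[h(X_j)w_j\1_{\{X_j\in S_l\}}\big] = c\,\EE_\pi\big[h(X)\1_{\{X\in S_l\}}\big]$ for every $j$, so that $\EE\big[\sum_{i:\,X_i\in S_l} h(X_i)w_i\big] = Mc\,\EE_\pi\big[h(X)\1_{\{X\in S_l\}}\big]$. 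The second step feeds this into the assumption on the within-region resampler: applied to the region-$l$ particles it returns $N_l$ new particles that are again properly weighted for $\pi$ on $S_l$ and whose total weight has been renormalised back to $W(l) = \sum_i w_i\1_{\{X_i\in S_l\}}$, so that the regional aggregate $\EE\big[\sum_{i:\,X_i\in S_l} h(X_i)w_i\big]$ is unchanged. Writing $(X_k',w_k')$ for the output particles, this reads $\EE\big[\sum_{k:\,X_k'\in S_l} h(X_k')w_k'\big] = \EE\big[\sum_{i:\,X_i\in S_l} h(X_i)w_i\big]$. Summing over $l=1,\dots,L$ and using $\sum_l N_l = M$ then gives $\EE\big[\sum_{k=1}^M h(X_k')w_k'\big] = Mc\sum_{l=1}^L \EE_\pi\big[h(X)\1_{\{X\in S_l\}}\big] = Mc\,\EE_\pi[h(X)] = \EE\big[\sum_{j=1}^M h(X_j)w_j\big]$: the aggregate properness identity is preserved with the constant inherited from the input, and the per-particle form of Definition \ref{def: proper} is then read off exactly as in Propositions \ref{refillingproper} and \ref{combineSplitProper}.

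The main obstacle is the bookkeeping around the random region populations $M_l(t)$ and the region-dependent renormalisation. One has to be careful that ``properly weighted within region $l$'' is understood with respect to the \emph{conditioned} law $\pi(\cdot\mid X\in S_l)$, that the normalising factors $\EE_\pi[\1_{\{X\in S_l\}}]$ appearing there cancel exactly against the rescaling of the region weights to $W(l)$, and that this rescaling is a benign one --- for the standard schemes it is deterministic, indeed automatic --- rather than a weight-dependent perturbation; it is this cancellation, not any feature of the particular local resampler, that makes the conclusion hold with a constant common across regions. The per-particle statement should then be obtained via the symmetry of the construction exactly as in the proofs of Propositions \ref{refillingproper} and \ref{combineSplitProper}. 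Finally, as already flagged in the text, the argument tacitly assumes that no region is empty at a resampling time; the complementary situation is handled not here but by the stopping-time device of Section \ref{stopping}.
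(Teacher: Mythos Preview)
Your proposal is correct and follows essentially the same region-by-region decomposition as the paper. The paper's own proof is considerably terser and stays at the per-particle level throughout: it simply writes $\EE[h(X_j')w_j']=\sum_{l=1}^L\EE[h(X_j')w_j'\1_{\{X_j'\in S_l\}}]=\sum_{l=1}^L c_l\,\EE_\pi[h(X)\1_{\{X\in S_l\}}]=c\,\EE_\pi[h(X)]$, whereas you route the argument through the aggregate sums $\EE\big[\sum_k h(X_k')w_k'\big]$ and then invoke symmetry to recover the per-particle statement. Your more explicit bookkeeping of the weight renormalisation to $W(l)$ and the resulting alignment of the regional constants is more careful than the paper's one-line gluing, but the underlying strategy is identical.
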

\begin{proof}
Let $(X_j',w_j')$ be the location and weight of particle $j$ after resampling. Then since resampled particles are properly weighted within each region, $\EE[h(X_j')w_j']=\sum\limits_{l=1}^L\EE[h(X_j')w_j'\1_{\{X_j'\in S_l\}}]=\sum\limits_{l=1}^Lc_l\EE_\pi[h(X)\1_{\{X\in S_l\}}]=c\EE_\pi[h(X)]$, where $c=\sum_{l=1}^L c_l$ for all $j$. Hence regional resampling produces properly weighted particles. $\qedsymbol$
\end{proof}
\subsection{Proof of convergence for a simple example}
In general it is extremely challenging to prove convergence of such SMC sampler methods with resampling, except in very specific circumstances. We present here a simplified model to \edit{prove} that the algorithm 
\edit{converges} in some situations. \edit{This stylised example is chosen because more general reducible processes could be used and achieve the same result but the result itself would be more complex without any additional intuitive illumination. The general SMC method is shown to converge, under multinomial resampling, and \edit{without irreducibility conditions as is also the case in Proposition 9.4.1 of \cite{delmoral2004}}, but our new resampling method extends this, since we also include the refilling methods defined in Section \ref{subsec: refilling}.}

Consider the pure death process on $\{0,1,2\}$. Transitions from state 2 to state 1 occur with rate $\delta$ and transitions from state 1 to 0 (absorption) occur with rate $1$. When $0 < \delta < 1$, the $2$-LCD is given by $(\delta, 1-\delta)$; when $\delta\geq 1$ the 2-LCD is simply $(1,0)$. We wish to prove that the SMC sampler with regional resampling converges. 
Choose two regions given by $S_1 = \{1\}$, $S_2 = \{2\}$. We stipulate $N_1, N_2$ with $N_1 + N_2 = M$ to be the desired populations of each region. In what follows, we let $w_l(t)$ be the unnormalized weights on the particles in state $l$ at time $t$.
We want to show that the normalized weights $(W_1(t), W_2(t))$ converge to some distribution for any choice of $(W_1(0)$, $W_2(0))$, if we take the limit $t\rightarrow \infty$ at the sequence of resampling times $(t_n)_{n\geq 1}$.

We look at the simplest case where combine-split resampling happens within each region after every event in the simulation. In this case, we can see the unnormalized weights as moving at points of a Poisson process with rate $N_1+\delta N_2$, with the jump chain $(w_1, w_2)(n) = (w_1(t_n), w_2(t_n))$ moving like
\begin{align*}
  (w_1, w_2)(n+1) =  \begin{cases}
                                 \left( w_1(n)\frac{N_1 - 1}{N_1}, w_2(n) \right) & \text{w.p.} \quad \frac{N_1}{N_1+\delta N_2}, \\
                                 \left(w_1(n) + \frac{w_2(n)}{N_2}, w_2(n)\frac{N_2 - 1}{N_2} \right) & \text{w.p.} \quad  \frac{\delta N_2}{N_1 + \delta N_2}.
                               \end{cases}
\end{align*}
\edit{where ``w.p.'' means ``with probability''.}
Each jump immediately triggers a combine-split resampling within each region. In this model the regions consist of just one state, and so the combine-split resampling simply spreads the weight uniformly amongst the particles in that region.

If we define $X(n) := W_1(t_n)$, this Markov chain evolves according to:
\begin{align}\label{eqn: X}
  X(n+1) = 
  \begin{cases}
		\frac{X(n)( N_1 - 1)}{N_1 - X(n)} & \text{w.p.} \quad  \frac{N_1}{N_1 + \delta N_2} \\
		X(n) + \frac{1-X(n)}{N_2} & \text{w.p.} \quad \frac{\delta N_2}{N_1+\delta N_2}
	\end{cases}
\end{align}
Since we know that $W_2(t_n) = 1 - X(n)$, if we can show convergence in Wasserstein Distance of $X(n)$ then we have convergence in Wasserstein distance of both weights.

\begin{def}\label{def: dWass}
Let $E$ be a Polish space, and let $d: E \times E \rightarrow [0,1]$ be a distance-like function (symmetric, lower-semi-continuous and $d(x,y)=0 \Leftrightarrow x=y$). Then the \emph{Wasserstein-d distance} between two probability measures $\mu$, $\nu$ on $E$ is given by
		\[
		  \cW_d(\mu, \nu) = \inf_\pi \int_{E \times E} d(u,v) \pi(du,dv)
		\]
where $\pi$ runs over all probability measures on $E \times E$ which have marginals $\mu$, $\nu$.
\end{def}

\begin{thm}\label{thm: Wass}
  The distribution of $X(n)$ on $(0, 1)$ defined in \eqref{eqn: X} converges in Wasserstein-$d$ distance with 
	\[
d(x,y) = \min \left\{ 1, \left| (1-x)^{-1} - (1-y)^{-1} \right| \right\}
\]
to some stationary distribution $\pi$, for all $\delta \in (0,1)$, $N_1 \geq 2$, $N_2 \geq \max(5, \frac{1}{1-\delta})$.
\end{thm}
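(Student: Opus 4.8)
The plan is to linearise \eqref{eqn: X} by the substitution $u=\varphi(x):=(1-x)^{-1}$, which maps $(0,1)$ bijectively onto $(1,\infty)$ and, crucially, satisfies $d(x,y)=\bar{d}(\varphi(x),\varphi(y))$ for the \emph{truncated Euclidean} metric $\bar{d}(u,v):=\min\{1,|u-v|\}$. I would therefore first reduce the claim to showing that $u(n):=\varphi(X(n))$ converges in $\cW_{\bar{d}}$ to some stationary law $\pi_u$ on $(1,\infty)$, and then take $\pi$ to be the pushforward of $\pi_u$ under $\varphi^{-1}$.

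Substituting $u=(1-x)^{-1}$ into the two branches of \eqref{eqn: X} turns them — a short computation — into the affine maps $u\mapsto\frac{N_1-1}{N_1}u+\frac{1}{N_1}$ and $u\mapsto\frac{N_2}{N_2-1}u$, so $u(n)$ is a random affine recursion $u(n+1)=a_nu(n)+b_n$ with $(a_n,b_n)$ i.i.d., equal to $(\frac{N_1-1}{N_1},\frac{1}{N_1})$ with probability $\frac{N_1}{N_1+\delta N_2}$ and to $(\frac{N_2}{N_2-1},0)$ with probability $\frac{\delta N_2}{N_1+\delta N_2}$. Note $N_1\geq 2$ makes $\frac{N_1-1}{N_1}>0$, so both maps send $(1,\infty)$ strictly into itself and $X(n)$ never reaches the absorbing state $0$. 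The quantitative heart of the argument is the average-contraction condition $\EE[\log a_n]<0$: since $(N_1+\delta N_2)\EE[\log a_n]=N_1\log(1-\frac{1}{N_1})+\delta N_2\log(1+\frac{1}{N_2-1})$, I would bound this using $\log(1-t)<-t$ and $\log(1+t)\le t$ by a quantity strictly below $-1+\frac{\delta N_2}{N_2-1}$, which is $\le 0$ once $N_2(1-\delta)\ge 1$; a slightly more careful accounting even yields $\EE[a_n]<1$ with an explicit $\rho<1$. The additive term is bounded ($b_n\in\{0,\frac{1}{N_1}\}$), so $\EE[\log^+|b_n|]<\infty$ is automatic.

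Given $\EE[\log a_n]<0$ and the bounded additive term, I would invoke the standard theory of iterated random affine maps: the backward iterates converge a.s.\ because the partial products $a_1\cdots a_{k-1}=\exp(\sum_{j<k}\log a_j)$ decay geometrically a.s.\ by the strong law of large numbers, and their limit $u_\infty\in(1,\infty)$ has a law $\pi_u$ which is the unique stationary distribution. For the convergence statement I would use a synchronous coupling: run $u(n)$ from $\mathrm{Law}(\varphi(X(0)))$ alongside a stationary copy $u^\ast(n)\sim\pi_u$ driven by the same $(a_n,b_n)$, so that $|u(n)-u^\ast(n)|=(a_1\cdots a_n)\,|u(0)-u^\ast(0)|\to 0$ a.s.\ (the factor $|u(0)-u^\ast(0)|$ being finite a.s.\ since $\pi_u$ is a proper measure). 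As $\bar{d}\le 1$, dominated convergence gives $\EE[\bar{d}(u(n),u^\ast(n))]\to 0$, and since $(X(n),\varphi^{-1}(u^\ast(n)))$ is a coupling of $\mathrm{Law}(X(n))$ with $\pi$ at cost $d$, this forces $\cW_d(\mathrm{Law}(X(n)),\pi)\to 0$. Using $\EE[a_n^s]<1$ for a small $s>0$ (valid since $\EE[\log a_n]<0$) in place of the a.s.\ argument would instead give $\EE[\bar{d}(u(n),u^\ast(n))]\le (\EE[a_n^s])^n\,\EE|u(0)-u^\ast(0)|^s$ and hence a geometric rate.

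The genuine obstacle is the first, conceptual step: recognising that $\varphi(x)=(1-x)^{-1}$ is precisely the coordinate in which \eqref{eqn: X} becomes affine — once that is seen, everything else is either an elementary inequality or an appeal to well-developed machinery. The one point needing care is that the $u$-chain lives on the unbounded half-line, so $\pi_u$ may be heavy-tailed; this is exactly why the metric is truncated at $1$, which is what lets the dominated-convergence step run with no moment assumption beyond what the contraction already supplies. The precise thresholds $N_1\geq 2$ and $N_2\geq\max(5,\frac{1}{1-\delta})$ then mark the range in which the contraction estimate can be pushed through quantitatively while the recursion is kept strictly off the absorbing boundary.
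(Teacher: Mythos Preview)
Your argument is correct and takes a genuinely different route from the paper. The paper verifies the conditions of a Wasserstein--Harris theorem due to Hairer, Stuart and Vollmer: a Lyapunov function $V(x)=x/(1-x)$, a $d$-contraction, a $d$-small sublevel set, the Feller property, and a complete metric dominated by $\sqrt{d}$. The proofs of the three key lemmas are deferred to a thesis. You instead observe that $\varphi(x)=(1-x)^{-1}$ (which is exactly $V+1$) \emph{linearises} the recursion, so that $u(n)=\varphi(X(n))$ is an i.i.d.\ random affine iteration $u\mapsto a u+b$; you then verify $\EE[\log a]<0$ by elementary inequalities and invoke the classical perpetuity theory together with a synchronous coupling and dominated convergence (licensed by the truncation in $d$).

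Your approach is more elementary and more transparent: it explains \emph{why} the paper's choice of $V$ and $d$ work, yields an explicit series representation of $\pi_u$, and produces a geometric rate via $\EE[a^s]<1$ without further effort. It also shows that the constraint $N_2\geq 5$ is not actually needed --- only $N_1\geq 2$ and $N_2(1-\delta)\geq 1$ enter your contraction estimate --- whereas the paper presumably uses $N_2\geq 5$ inside the deferred small-set lemma. The paper's route, by contrast, is robust to nonlinear perturbations of \eqref{eqn: X} for which no exact linearising coordinate exists, which is the usual trade-off between a bespoke argument and general machinery.
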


We prove this be applying the following theorem taken from \cite{hairer2014}.
\begin{thm}\label{thm: hairer}
  Let $P$ be a Markov kernel over a Polish space $E$ and assume that:
  \begin{enumerate}
    \item $P$ has a Lyapunov function $V: E \rightarrow \RR$ such that there exists $\lambda \in [0, 1)$ and $K > 0$ such that for all $x \in E$,
    \[
      PV(x) := \int_E V(u) P(x,du) < \lambda V(x) + K
    \]
		
		\item P is $d$-contracting for some distance-like function $d: E\times E \rightarrow [0,1]$ ($d$ is symmetric, lower-semi-continuous and $d(x,y)=0 \Leftrightarrow x=y$), so that there exists $c \in (0,1)$ such that for every $x,y \in E$ where $d(x,y)<1$ we have
		\[
		  \cW_d(P(x,\cdot), P(y, \cdot)) < c \, d(x,y).
		\]
		\item The set $S = \{x: V(x) < 4K\}$ is $d$-small, so that there exists $s \in (0,1)$ such that for all $x,y \in S$
		\[
		  \cW_d(P(x,\cdot), P(y, \cdot)) \leq s.
		\]
  \end{enumerate}
  Then there exists $n \in \NN$ such that for any two probability measures $\mu$, $\nu$ on $E$ we have
  \[
    \cW_{\tilde{d}}(\mu P^n, \nu P^n)  \leq \cW_{\tilde{d}}(\mu, \nu)
  \]
	where $\tilde{d}(x,y) = ( d(x,y)(1 + V(x) + V(y)) )^{1/2}$, and $n$ is increasing in $\lambda, K, c, s$. Hence there is at most one invariant measure. 
	
	Moreover, if the following hold:
	\begin{enumerate}
	\setcounter{enumi}{3}
	\item There exists a complete metric $d_0$ on $E$ such that $d_0 \leq \sqrt{d}$
	\item $P$ is Feller on $E$, which holds precisely when for any continuous function $f$ on $E$, $\int_E f(y) P(x,dy)$ is continuous for every $x \in E$,
	\end{enumerate}
	then there exists a unique invariant measure $\mu$ for $P$. [Hairer, Stuart, Vollmer]
\end{thm}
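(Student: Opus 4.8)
The plan is to recognize Theorem~\ref{thm: hairer} as a form of the \emph{weak Harris theorem} of Hairer, Mattingly and Scheutzow, and to prove it by exhibiting a single iterate $P^n$ that strictly contracts the weighted distance-like function $\tilde d(x,y)=\big(d(x,y)(1+V(x)+V(y))\big)^{1/2}$ at the level of point masses, then lifting this contraction to arbitrary measures, and finally reading off uniqueness from the contraction and (under the extra hypotheses) existence from a completeness/fixed-point argument. Throughout I take $V\geq 0$, so that $1+V(x)+V(y)\geq 1$ and $\tilde d$ is well defined, and I use repeatedly that $d\leq 1$ is bounded, which is exactly what allows the drift hypothesis and the $d$-contraction hypothesis to be played off against one another.

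First I would iterate the drift condition (1): by induction $P^nV(x)\le\lambda^nV(x)+K/(1-\lambda)$, so excursions of $V$ are controlled and the sublevel sets of $V$ form a recurrent core. The heart of the argument is then a pointwise estimate on $\cW_{\tilde d}(P^n(x,\cdot),P^n(y,\cdot))$ obtained by a dichotomy on the size of $V(x)+V(y)$. \textbf{Large-$V$ regime:} if $V(x)+V(y)$ exceeds a threshold, then because $d\le 1$ the dominant factor is $1+V+V$, and the geometric drift forces $\int(1+V+V)\,\mathrm{d}(\text{coupling})$ to be a definite fraction of $1+V(x)+V(y)$; since the $d$-factor cannot grow beyond $1$, $\tilde d$ contracts. \textbf{Small-$V$ regime:} if $V(x)+V(y)$ lies below the threshold then both points lie in a fixed dilation of the small set $S=\{V<4K\}$, so the $d$-smallness (3) forces $\cW_d(P(x,\cdot),P(y,\cdot))\le s<1$, after which the $d$-contraction (2) applies on subsequent steps and shrinks the $d$-factor geometrically, while the drift keeps the $V$-factor bounded. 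Choosing the threshold, and then $n$ large (this is where $n$ must grow as $\lambda,K,c,s$ worsen), one balances the two regimes to obtain $\cW_{\tilde d}(P^n(x,\cdot),P^n(y,\cdot))\le\alpha\,\tilde d(x,y)$ for some $\alpha\in(0,1)$; the printed inequality with constant $1$ is simply the weaker endpoint of this. \emph{I expect this balancing to be the main obstacle}, since it is precisely the interplay of the constants, the weight $1+V(x)+V(y)$, and the threshold $4K$ that makes both regimes contract simultaneously.

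Next I would lift the pointwise contraction to measures. Given $\mu,\nu$, take a near-optimal coupling $\gamma$ for $\cW_{\tilde d}$ and glue it with measurably selected near-optimal couplings of $P^n(x,\cdot)$ and $P^n(y,\cdot)$; measurable selection is available because $\tilde d$ is lower semicontinuous, being a product of the l.s.c.\ function $d$ with the l.s.c.\ weight $1+V+V$. The resulting coupling of $\mu P^n$ and $\nu P^n$ has $\tilde d$-cost at most $\alpha\,\cW_{\tilde d}(\mu,\nu)$, giving $\cW_{\tilde d}(\mu P^n,\nu P^n)\le\alpha\,\cW_{\tilde d}(\mu,\nu)$ and in particular the stated inequality. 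Uniqueness is then immediate: if $\mu,\nu$ are both invariant then $\mu P^n=\mu$ and $\nu P^n=\nu$, so $\cW_{\tilde d}(\mu,\nu)\le\alpha\,\cW_{\tilde d}(\mu,\nu)$ with $\alpha<1$, forcing $\cW_{\tilde d}(\mu,\nu)=0$ and hence $\mu=\nu$ because $\tilde d$ separates points.

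Finally, for existence under (4)--(5), the plan is a Banach fixed-point argument on the space $\mathcal P_{\tilde d}(E)$ of probability measures with finite $\tilde d$-moment. Fixing $x_0$, the orbit $\mu_k:=\delta_{x_0}P^{nk}$ is $\cW_{\tilde d}$-Cauchy since $\cW_{\tilde d}(\mu_{k+1},\mu_k)\le\alpha^k\,\cW_{\tilde d}(\mu_1,\mu_0)$ is summable, with the moment finiteness supplied by the iterated drift bound. Condition (4), a complete metric $d_0\le\sqrt d$, ensures that $\cW_{\tilde d}$-Cauchy sequences converge with no escape of mass, since control in $\tilde d$ dominates control in the complete metric $d_0$ and pins down a genuine probability-measure limit $\mu_k\to\mu$. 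The Feller property (5) then lets me pass $P^n$ through the limit, identifying $\mu P^n=\lim_k\mu_kP^n=\lim_k\mu_{k+1}=\mu$, so $\mu$ is $P^n$-invariant. Since the contraction already yields at most one $P^n$-invariant measure, and since $\mu P$ is again $P^n$-invariant (because $(\mu P)P^n=(\mu P^n)P=\mu P$), uniqueness forces $\mu P=\mu$; hence $\mu$ is the unique invariant measure for $P$.
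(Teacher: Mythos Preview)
The paper does not prove Theorem~\ref{thm: hairer} at all: it is quoted verbatim from \cite{hairer2014} (Hairer, Stuart, Vollmer) and used as a black box to establish Theorem~\ref{thm: Wass}. There is therefore no proof in the paper to compare your proposal against.

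That said, your sketch is essentially the standard route to the weak Harris theorem as developed by Hairer--Mattingly and refined in Hairer--Stuart--Vollmer: the dichotomy between the large-$V$ regime (where geometric drift contracts the weight factor) and the small-$V$ regime (where $d$-smallness followed by $d$-contraction shrinks the $d$ factor), then lifting via measurable selection of couplings, and finally existence through a Cauchy-sequence argument exploiting the complete metric $d_0\le\sqrt d$ and the Feller property. This is exactly the architecture of the original proof, and your identification of the balancing of constants as the main technical obstacle is accurate. One minor caution: in the small-$V$ regime you write that both points lie ``in a fixed dilation of the small set $S=\{V<4K\}$'', but the $d$-smallness hypothesis is only stated for $S$ itself, not for dilations of it; the actual argument chooses the threshold to be exactly $4K$ so that both points genuinely lie in $S$, and the factor of $4$ (rather than, say, $2$) is there precisely to absorb the additive constant from the drift when iterating. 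Otherwise the outline is sound.
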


The statement follows by the application of the following lemmas, proofs of which can be found in Chapter 5 of  \cite{griffin2016}.

\begin{lem}\label{lem: W1}
	The function $V(x) = x(1-x)^{-1}$ is Lyapunov for $P$ as above with
	\begin{align*}
	\lambda = 1 - \frac{N_2(1 - \delta)  - 1}{(N_2 -1)(N_1 + \delta N_2)} & &
	K = \frac{\delta N_2}{(N_2 - 1)(N_1 + \delta N_2)}
	\end{align*}
\end{lem}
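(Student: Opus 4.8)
The plan is to compute $PV(x)$ exactly rather than merely to bound it, since the kernel in \eqref{eqn: X} has only two branches and $V(x)=x(1-x)^{-1}$ transforms cleanly under each. First I would evaluate $V$ at the image of the ``death'' branch $y_1 = x(N_1-1)/(N_1-x)$: a one-line computation gives $1-y_1 = N_1(1-x)/(N_1-x)$, hence $V(y_1) = \frac{N_1-1}{N_1}V(x)$, so this branch simply rescales $V$ by the factor $\frac{N_1-1}{N_1}<1$. For the other branch $y_2 = x + (1-x)/N_2$ one finds $1-y_2 = (N_2-1)(1-x)/N_2$, so $V(y_2) = V(x) + \frac{1}{(N_2-1)(1-x)}$.

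The essential observation is the identity $\frac{1}{1-x} = 1 + V(x)$, which rewrites the seemingly unbounded correction term as an affine function of $V(x)$, giving $V(y_2) = \frac{N_2}{N_2-1}V(x) + \frac{1}{N_2-1}$. Averaging the two cases against the branch probabilities $\frac{N_1}{N_1+\delta N_2}$ and $\frac{\delta N_2}{N_1+\delta N_2}$ yields
\[
PV(x) = \left(\frac{N_1-1}{N_1+\delta N_2} + \frac{\delta N_2^2}{(N_2-1)(N_1+\delta N_2)}\right)V(x) + \frac{\delta N_2}{(N_2-1)(N_1+\delta N_2)},
\]
so in fact the Lyapunov inequality holds with equality. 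Placing the coefficient of $V(x)$ over the common denominator $(N_2-1)(N_1+\delta N_2)$ and simplifying its numerator to $(N_1-1)(N_2-1)+\delta N_2^2 = (N_2-1)(N_1+\delta N_2) - \big(N_2(1-\delta)-1\big)$ recovers precisely the stated $\lambda$, while the constant term is exactly the stated $K$.

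It then remains only to confirm $\lambda \in [0,1)$. Nonnegativity is immediate because the numerator $(N_1-1)(N_2-1)+\delta N_2^2 \geq 0$ whenever $N_1 \geq 2$ and $N_2 \geq 5$; and $\lambda < 1$ is equivalent to $N_2(1-\delta)-1 > 0$, i.e. $N_2 > \frac{1}{1-\delta}$, which is ensured by the hypothesis $N_2 \geq \max\!\left(5, \frac{1}{1-\delta}\right)$. There is no genuine obstacle in this lemma; the only point requiring a little care is invoking $\frac{1}{1-x}=1+V(x)$ so that the drift bound is genuinely of the form $\lambda V(x) + K$ with $\lambda,K$ independent of $x$, together with the routine bookkeeping of the two branch probabilities when assembling $PV(x)$.
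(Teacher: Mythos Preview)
Your computation is correct: evaluating $V$ on each branch, using $\tfrac{1}{1-x}=1+V(x)$ to linearise the second branch, and averaging gives $PV(x)=\lambda V(x)+K$ with equality, for exactly the stated constants. The paper does not include a proof of this lemma in the text (it defers to Chapter~5 of \cite{griffin2016}), so there is no in-paper argument to compare against; your direct calculation is the natural one and would be what any reader reconstructs. One very minor point: the hypothesis $N_2 \geq \max\!\left(5,\tfrac{1}{1-\delta}\right)$ from Theorem~\ref{thm: Wass} gives $\lambda<1$ only with strict inequality $N_2>\tfrac{1}{1-\delta}$, so in the edge case where $\tfrac{1}{1-\delta}$ is an integer equal to $N_2$ you would get $\lambda=1$; this is a boundary issue in the theorem's hypotheses rather than in your argument.
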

\begin{lem}\label{lem: W2}
	$P$ is $d$-contracting for distance like function 
	\[
	d(x,y) = \min \left\{ 1, \left| (1-x)^{-1} - (1-y)^{-1} \right| \right\}
	\]
\end{lem}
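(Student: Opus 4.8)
The plan is to reduce everything to a one-dimensional \emph{affine} recursion via the change of coordinates $\phi(x) = (1-x)^{-1}$, which maps $(0,1)$ homeomorphically onto $(1,\infty)$ and turns $d$ into a truncation of Euclidean distance: $d(x,y) = \min\{1, |\phi(x)-\phi(y)|\}$. First I would compute the image of each branch of \eqref{eqn: X} under $\phi$. A short calculation shows that the first branch $x \mapsto x(N_1-1)/(N_1-x)$ becomes $\phi \mapsto \frac{N_1-1}{N_1}\phi + \frac{1}{N_1}$, an affine contraction with factor $\frac{N_1-1}{N_1}<1$ (this uses $N_1 \geq 2$), while the second branch $x \mapsto x + (1-x)/N_2$ becomes the pure dilation $\phi \mapsto \frac{N_2}{N_2-1}\phi$ with factor $\frac{N_2}{N_2-1}>1$. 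The same computation confirms that $(0,1)$ is $P$-invariant, so $E=(0,1)$ is a legitimate Polish state space and $d$ is a genuine distance-like function.

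Next I would apply the \emph{synchronous coupling}: run two copies of the chain from $x$ and $y$ using the same Bernoulli variable to select the branch at each step. If $d(x,y)<1$, then $|\phi(x)-\phi(y)| = d(x,y)$ with no truncation, and after one step the $\phi$-distance is exactly $\frac{N_1-1}{N_1}d(x,y)$ on the first branch (still below $1$, so again no truncation) and exactly $\frac{N_2}{N_2-1}d(x,y)$ on the second (possibly truncated at $1$, which only decreases it). Taking expectations and using that $\cW_d$ is an infimum over couplings yields
\[
\cW_d\big(P(x,\cdot),P(y,\cdot)\big) \;\leq\; \left( \frac{N_1}{N_1+\delta N_2}\cdot\frac{N_1-1}{N_1} + \frac{\delta N_2}{N_1+\delta N_2}\cdot\frac{N_2}{N_2-1} \right) d(x,y) \;=:\; c\,d(x,y).
\]

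Finally I would verify $c<1$. Clearing denominators, this is equivalent to $\delta N_2 < N_2 - 1$, i.e.\ $N_2(1-\delta)>1$, which holds under the hypothesis $N_2 \geq \max(5, \tfrac{1}{1-\delta})$ (the sole edge case, $\tfrac{1}{1-\delta}$ an integer equal to $N_2$, I would rule out or dispatch separately); note this is exactly the quantity $N_2(1-\delta)-1$ appearing in $\lambda$ of Lemma~\ref{lem: W1}, so the hypotheses are consistent. This establishes the $d$-contraction condition (2) of Theorem~\ref{thm: hairer}. I do not expect a genuine obstacle here: the entire content is recognising that $\phi$ linearises both branches, and the only point demanding care is the bookkeeping of the $\min\{1,\cdot\}$ truncation — handled by the twin observations that the assumption $d(x,y)<1$ removes truncation at the source, while truncation at the target can only shrink distances.
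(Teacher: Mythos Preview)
Your argument is correct. The change of variable $\phi(x)=(1-x)^{-1}$ is exactly what the choice of $d$ (and of $V(x)=x/(1-x)=\phi(x)-1$ in Lemma~\ref{lem: W1}) is engineered for, and your verifications that the two branches become the affine maps $\phi\mapsto\tfrac{N_1-1}{N_1}\phi+\tfrac{1}{N_1}$ and $\phi\mapsto\tfrac{N_2}{N_2-1}\phi$ are accurate. The synchronous coupling then gives
\[
\cW_d\bigl(P(x,\cdot),P(y,\cdot)\bigr)\;\le\;\Bigl(\tfrac{N_1-1}{N_1+\delta N_2}+\tfrac{\delta N_2}{N_1+\delta N_2}\cdot\tfrac{N_2}{N_2-1}\Bigr)\,d(x,y)
\]
for $d(x,y)<1$, and your reduction of $c<1$ to $N_2(1-\delta)>1$ is right; this is precisely the positivity of the numerator appearing in $1-\lambda$ in Lemma~\ref{lem: W1}, so the parameter constraints fit together as you say. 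Your handling of the truncation is also fine: it is vacuous at the source by hypothesis and can only help at the target. The only residual wrinkle is the boundary case $N_2=\tfrac{1}{1-\delta}\in\NN$, where $c=1$; you flag it, and in practice one simply reads the standing assumption as $N_2>\tfrac{1}{1-\delta}$ (equivalently $N_2(1-\delta)>1$), which is also what is needed for $\lambda<1$ in Lemma~\ref{lem: W1}.

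As for comparison: the paper itself does not give a proof of Lemma~\ref{lem: W2} but defers it to Chapter~5 of \cite{griffin2016}. Given that both $d$ and $V$ are written in the variable $(1-x)^{-1}$, the linearising substitution you use is the evident route, and your contraction constant coincides term-by-term with the constants appearing in Lemma~\ref{lem: W1}; it is therefore very likely the same argument as in the thesis.
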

\begin{lem}\label{lem: W3}
	The set $S = \{x: V(x) < 4K \}$ is $d$-small with $V(x)$ and $K$ as defined in Lemma \ref{lem: W1}, and $d(x,y)$ as in Lemma \ref{lem: W2}.
\end{lem}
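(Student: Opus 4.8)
The plan is to work throughout in the transformed coordinate $\phi(x) = (1-x)^{-1}$, in which the distance of Lemma \ref{lem: W2} is simply $d(x,y) = \min\{1, |\phi(x)-\phi(y)|\}$ and the Lyapunov function satisfies the identity $\phi(x) = 1 + V(x)$. First I would use this identity to pin down the geometry of the candidate small set. Since $V(x) < 4K$ is equivalent to $\phi(x) < 1 + 4K$, and since $x \in (0,1)$ forces $\phi(x) > 1$, the set $S = \{x : V(x) < 4K\}$ maps into the interval $(1, 1+4K)$ under $\phi$. Consequently, for any $x, y \in S$ we have $|\phi(x) - \phi(y)| < 4K$, and therefore $d(x,y) \leq |\phi(x)-\phi(y)| < 4K$; in particular the $d$-diameter of $S$ is bounded above by $4K$.

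The second step is to verify that this diameter is strictly below $1$, so that the contraction established in Lemma \ref{lem: W2} can actually be invoked on $S$. Writing $4K = \frac{4}{N_2-1}\cdot\frac{\delta N_2}{N_1+\delta N_2}$, the hypothesis $N_2 \geq 5$ gives $\frac{4}{N_2-1} \leq 1$, while $N_1 \geq 2 > 0$ gives $\frac{\delta N_2}{N_1+\delta N_2} < 1$; hence $4K \leq \frac{\delta N_2}{N_1+\delta N_2} < 1$. With $4K < 1$ in hand, every pair $x, y \in S$ satisfies $d(x,y) < 1$, so the hypothesis of $d$-contraction is met. Applying Lemma \ref{lem: W2} with its contraction constant $c \in (0,1)$ then yields
\[
\cW_d(P(x,\cdot), P(y,\cdot)) \leq c\, d(x,y) \leq 4Kc
\]
for all $x, y \in S$. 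Setting $s := 4Kc$, which lies in $(0,1)$ because both factors do, gives $\cW_d(P(x,\cdot),P(y,\cdot)) \leq s$ uniformly over $S$, which is exactly the $d$-smallness required by condition 3 of Theorem \ref{thm: hairer}.

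The argument is short precisely because the genuinely analytic work --- exhibiting the synchronous coupling of the two-point kernels and balancing the expanding $g$-branch against the contracting $f$-branch --- has already been carried out in Lemma \ref{lem: W2}. The only point requiring real care is the interface between the two hypotheses: the small set is defined through the Lyapunov function $V$, whereas contraction is phrased through the metric $d$, and one must translate between them via $\phi(x) = 1 + V(x)$ and then confirm that the resulting $d$-diameter bound $4K$ falls below $1$. This is exactly where the numerical condition $N_2 \geq 5$ is consumed; without a bound guaranteeing $d(x,y) < 1$ throughout $S$, Lemma \ref{lem: W2} would not apply and one would be forced to re-run the coupling directly, capping the expanding branch using $d \leq 1$. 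I would keep the cleaner route through Lemma \ref{lem: W2}.
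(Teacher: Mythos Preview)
Your proof is correct. The paper does not actually supply its own proof of this lemma, deferring instead to Chapter 5 of \cite{griffin2016}, so there is no in-paper argument to compare against; your route---identifying $\phi(x)=1+V(x)$, bounding the $d$-diameter of $S$ by $4K$, verifying $4K<1$ via the constraint $N_2\geq 5$, and then invoking the contraction of Lemma \ref{lem: W2} to produce $s=4Kc\in(0,1)$---is the natural and efficient one.
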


\begin{proof}[Proof of Theorem \ref{thm: Wass}]
Lemmas \ref{lem: W1}, \ref{lem: W2} and \ref{lem: W3} prove that conditions 1,2 and 3 of Theorem \ref{thm: hairer} hold, which gives us that there exists at most one invariant measure.

To prove the existence of the invariant measure we need to satisfy the additional conditions 4 and 5. Since $P(x, \cdot)$ is a finite sum of atomic measures for every $x$ in $E$, we get that $P$ is Feller on $E$, satisfying condition 5. For condition 4 we look for a complete metric $d_0 \leq \sqrt{d}$. Indeed, since we can consider the process $X$ to be defined on $[0,1]$, we do this by extending the distance-like function $d$ to include 0 and 1:
\begin{align*}
d(x,1) = 1& &d(1,1)=0& & d(x,0)=\min \left\{ \left| \frac{1}{1-x} - 1 \right| , 1\right\}.
\end{align*}
Since for all $x,y\in [0,1]$ we have that $d(x,y)\in [0,1]$, we have that
$\sqrt{d(x,y)} \geq d(x,y) = \frac{|x-y|}{(1-x)(1-y)} \geq |x-y|.$
Setting $d_0$ to be the Euclidean metric, which is complete on $[0,1]$ gives us the required condition. Therefore condition 4 of Theorem \ref{thm: hairer} is satisfied and hence there exists a unique invariant measure for the process $X$. $\qedsymbol$
\end{proof}

Now we know that this Markov chain converges in a Wasserstein distance to some limiting distribution on $(0, 1)$, we would hope that the mean of the limiting distribution is close to the true value we want: $\delta$. For linear systems this is simple to compute, but this is not the case for our process. However, one can see in Figure \ref{fig: conv} that for even modest $N_1$, $N_2$ the process above does converge to a value close to the true $\delta$. \edit{We note that this could be done on a more general reducible process on $\{0, \dots, L\}$ but would involve multivariate processes $(W_1, W_2, \dots)$ which would make the proof more complex but give the same result.}

\begin{figure}[h!]
	\centering
	\begin{subfigure}[b]{0.45\textwidth}
		\includegraphics[width=\textwidth]{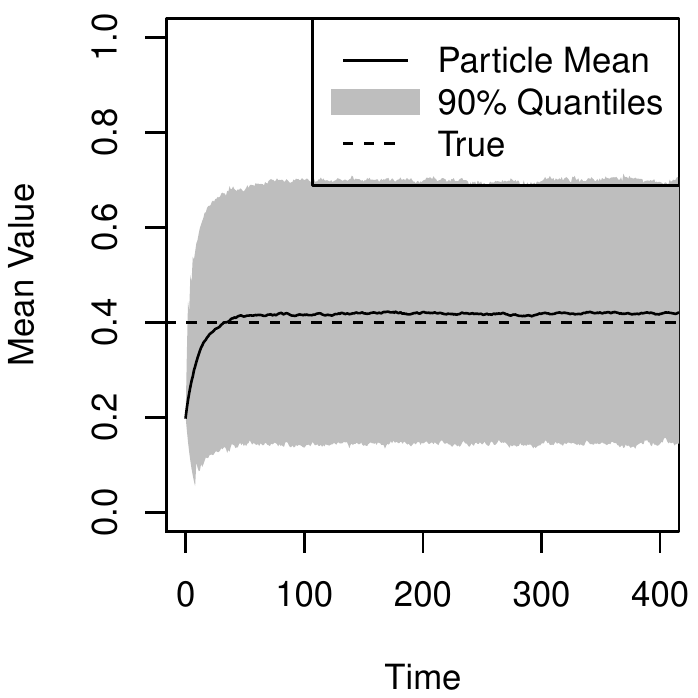}
		\caption{$N_1 = 6$, $N_2 = 6$}
		\label{fig: conv a}
	\end{subfigure}\quad
	\begin{subfigure}[b]{0.45\textwidth}
		\includegraphics[width=\textwidth]{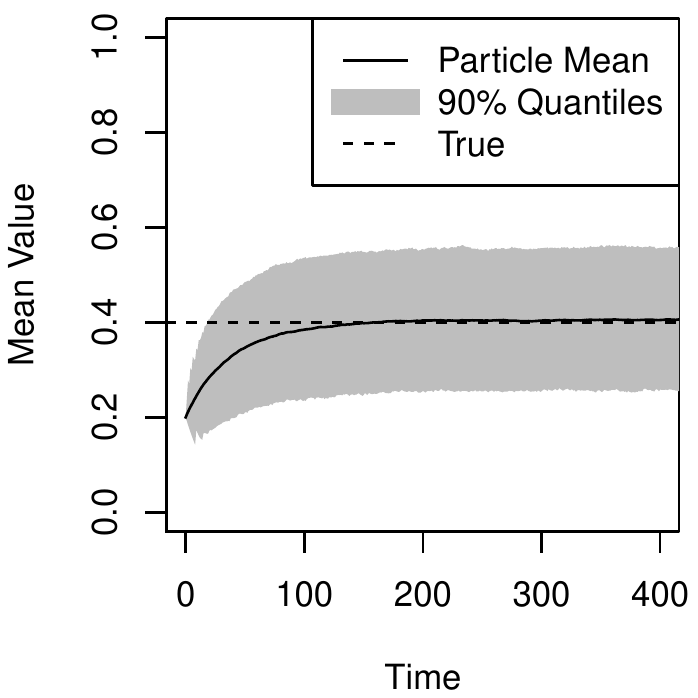}
		\caption{$N_1 = 20$, $N_2=20$}
		\label{fig: conv b}
	\end{subfigure}
	\caption{Mean convergence of process as $N_1, N_2$ change: $\delta = 0.4$, $M=5000$ particles.}
	\label{fig: conv}
\end{figure}
	
\subsection{Example: Pure death process}\label{puredeath}
To further demonstrate regional resampling we consider again the pure death process on a finite state space: $S = \{1,2,\dots,K\}$ with absorbing state $0$, described in Section \ref{subsubsec: motive}. 
The LCDs of the finite pure death process are classified in Theorem \ref{thm: d LCD}, which is proven in \cite{griffin2016}, and is an extension of a theorem in \cite{vandoorn2008}.
\begin{thm}\label{thm: d LCD}
	Let $(X_t)_{t \geq 0}$ be a pure death process on $\{0, 1, \dots, L\}$ with death rates $\{ \delta_i: 1\leq i \leq L\}$. Then for each $i \in \{1, \dots, L \}$ there exists a unique $i$-LCD which gives mass to states $\{1, \dots, L(i)\}$ where $L(i)=\max\{j:1\leq j \leq i, \delta_k\geq\delta_j \edit{\forall} k=1,\dots,j\}$. 
\end{thm}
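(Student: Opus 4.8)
The plan is to exploit the explicit probabilistic structure of the pure death process. Started from state $i$, the process only ever visits $\{0,1,\dots,i\}$, and the holding times $T_1,\dots,T_i$ in states $1,\dots,i$ are independent with $T_k\sim\Exp(\delta_k)$. Writing $U_j:=\sum_{k=j+1}^{i}T_k$ (with $U_i:=0$) for the entry time of state $j$, we have $\{X(t)=j\}=\{U_j\le t<U_j+T_j\}$ and $\{X(t)\in S\}=\{\sum_{k=1}^{i}T_k>t\}$. Conditioning on $U_j$ and using the lack of memory of $T_j$ gives the clean identity
\[
\PP[X(t)=j\mid X(0)=i]=e^{-\delta_j t}\,\EE\bigl[\ind_{\{U_j\le t\}}e^{\delta_j U_j}\bigr],
\qquad
\PP[X(t)\in S\mid X(0)=i]=\PP\Bigl[\textstyle\sum_{k=1}^{i}T_k>t\Bigr].
\]
Both tend to $0$, and by definition the $i$-LCD is the limit of the ratio of the former to the latter; so the proof reduces entirely to an asymptotic analysis as $t\to\infty$. (Uniqueness is then automatic: an $i$-LCD is a limit, hence unique once existence of the limit is established.)

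For the asymptotics I would use that a sum of independent exponentials is hypoexponential, with survival function an explicit linear combination of terms $t^{a}e^{-\delta_k t}$; the dominant contribution as $t\to\infty$ comes from $m:=\min_{1\le k\le i}\delta_k$, with a polynomial prefactor of degree one less than the multiplicity of $m$. Hence $\PP[X(t)\in S\mid X(0)=i]\sim C\,t^{r-1}e^{-mt}$ with $C>0$ and $r$ the multiplicity of $m$ among $\delta_1,\dots,\delta_i$. The same analysis, fed through the identity above (splitting into the cases $\delta_j=m$ and $\delta_j>m$, and estimating $\EE[\ind_{\{U_j\le t\}}e^{\delta_j U_j}]$ either by dominated convergence when $\delta_j$ is below every rate among $\delta_{j+1},\dots,\delta_i$, or by a dominant-balance estimate using the hypoexponential density of $U_j$ otherwise), yields $\PP[X(t)=j\mid X(0)=i]\sim c_j\,t^{N_j-1}e^{-m_j t}$, where $m_j:=\min_{j\le k\le i}\delta_k$, $N_j$ is the multiplicity of $m$ among $\delta_j,\dots,\delta_i$, and $c_j>0$.

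Dividing then identifies the LCD: $u_j=\lim_t \PP[X(t)=j\mid X(0)=i]/\PP[X(t)\in S\mid X(0)=i]$ equals $c_j/C>0$ when $m_j=m$ and the polynomial degrees match ($N_j=r$), and equals $0$ otherwise; moreover these limits sum to $1$ automatically, since the leading coefficient $C$ of the denominator is the sum of the $c_j$ over the contributing $j$. It remains to recognize the resulting index set. One checks the combinatorial identity that $m_j=m$ holds exactly for $j$ at or below the largest index $j^{\ast}\le i$ with $\delta_{j^{\ast}}=m$, and that $j^{\ast}=L(i)$: indeed $\delta_{j^{\ast}}=m$ forces $\delta_{j^{\ast}}=\min(\delta_1,\dots,\delta_{j^{\ast}})$, so $j^{\ast}$ is admissible in the definition of $L(i)$, while for any larger $j$ the prefix minimum already equals $m<\delta_j$, so $j$ is inadmissible. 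When the minimal rate is attained at a single index — the situation to which the statement's phrasing corresponds, where $L(i)$ is precisely that index — $r=1$, all polynomial factors disappear, and every $j\in\{1,\dots,L(i)\}$ receives strictly positive mass $c_j/C$, while $j>L(i)$ receives none; this gives the claimed support.

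The main obstacle is the bookkeeping of the polynomial-in-$t$ prefactors when $m$ has multiplicity greater than one: one must track the degree $N_j-1$ of each $\PP[X(t)=j\mid X(0)=i]$ and verify that the set of $j$ for which it is maximal is the claimed one, which is the delicate part of the statement. A secondary technical point is the rigorous justification of the dominant-balance asymptotics for $\EE[\ind_{\{U_j\le t\}}e^{\delta_j U_j}]$, most transparently done by substituting the explicit hypoexponential density of $U_j$ and integrating term by term, or via a Laplace-transform/Tauberian argument.
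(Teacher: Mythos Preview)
The paper does not actually prove this theorem in-text: it states the result and immediately defers to the thesis \cite{griffin2016}, noting only that it extends a theorem of \cite{vandoorn2008}. There is therefore no in-paper argument against which to compare your approach.

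On the merits, your probabilistic route is sound and is the natural one here. The identity $\PP[X(t)=j\mid X(0)=i]=e^{-\delta_j t}\,\EE\bigl[\ind_{\{U_j\le t\}}e^{\delta_j U_j}\bigr]$ is correct, the hypoexponential tail $\PP[X(t)\in S]\sim C\,t^{r-1}e^{-mt}$ is standard, and your identification of $L(i)$ with the largest index $j^\ast\le i$ at which the minimum $m=\min_{k\le i}\delta_k$ is attained is right. One slip: your $N_j$ should be the multiplicity of $m_j$ (not of $m$) among $\delta_j,\dots,\delta_i$, otherwise the exponent $N_j-1$ is ill-defined for $j>j^\ast$. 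With that fix the per-state asymptotic is $\PP[X(t)=j]\sim c_j\,t^{a_j-1}e^{-m_j t}$ with $a_j$ the multiplicity of $m_j$ in $\{\delta_j,\dots,\delta_i\}$, and your argument in the case $r=1$ goes through cleanly to give support exactly $\{1,\dots,L(i)\}$.

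The genuine gap is the one you flag but do not close: multiplicity $r>1$. Carrying your own asymptotics through, for $j\le j^\ast$ one has $m_j=m$ but $a_j=|\{k:j\le k\le i,\ \delta_k=m\}|$, and $a_j=r$ holds only for $j\le j^\ast_{\min}:=\min\{k\le i:\delta_k=m\}$. Hence the $i$-LCD actually has support $\{1,\dots,j^\ast_{\min}\}$, which can be strictly smaller than $\{1,\dots,L(i)\}$; for instance $\delta=(1,1,2)$ gives $L(3)=2$ but the $3$-LCD is the point mass at state $1$ (as the eigenvector equation $u_j(\delta_j-m)=\delta_{j+1}u_{j+1}$ also confirms). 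So your instinct to restrict to ``minimal rate attained at a single index'' is not a mere convenience: it is precisely the hypothesis under which the support statement, read as ``mass on every state in $\{1,\dots,L(i)\}$'', is true. If instead the theorem is read as ``support contained in $\{1,\dots,L(i)\}$'', your argument already proves it for all $r$, since $m_j>m$ for $j>L(i)$ forces $u_j=0$.
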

Theorem \ref{thm: d LCD} can be understood in terms of bottlenecks in the flow towards zero: the process conditioned on non-extinction will sit in states up to the narrowest bottleneck below the starting position.
The pure death process has a reducible state space, in fact no state can be returned to once it has been left. Such reducible state spaces present problems when simulating the LCD using SMC samplers because if at any point in time all of the particles have left a transient communicating class then this class can never be returned to. Regional resampling provides a way to ensure that the number of particles in each region is maintained.

Consider a toy example of a pure death process with transient states $S = \{1, \dots, 5\}$ and death rates given by $\delta = (\delta_1, \dots, \delta_5) = (3,2,3,1,3)$. By solving the left eigenvalue equations we see that this process has 5-LCD given by $\bu=(1/3,1/3,1/9,2/9,0)$.
\begin{figure}[h!]
	\centering
	\begin{subfigure}[b]{0.45\textwidth}
		\includegraphics[width=\textwidth]{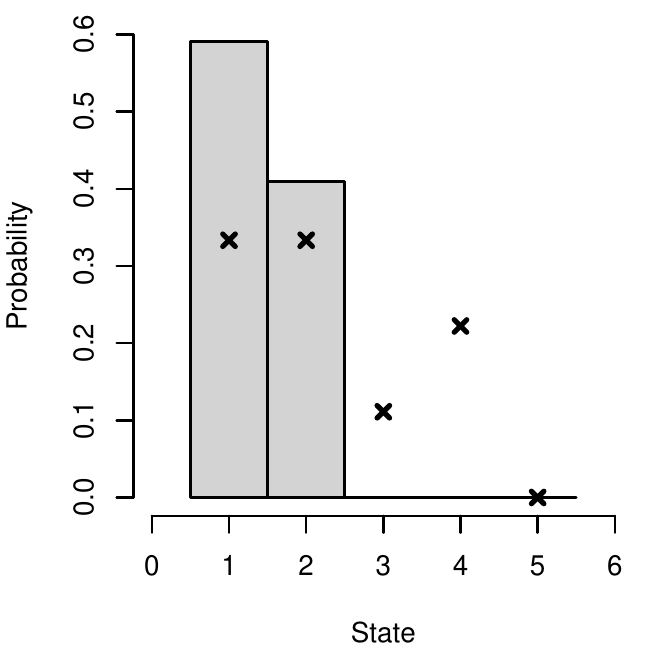}
		\caption{Simulation from 5-LCD with particle refilling}
		\label{fig: refill d}
	\end{subfigure}\quad
	\begin{subfigure}[b]{0.45\textwidth}
		\includegraphics[width=\textwidth]{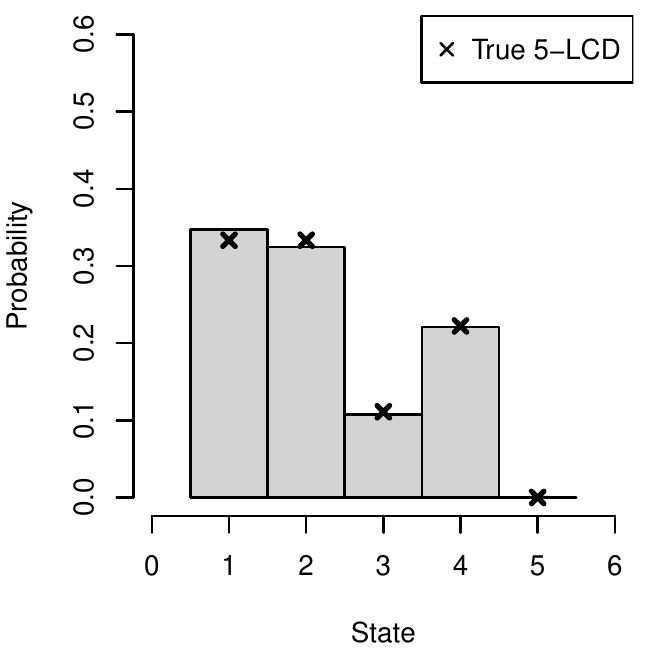}
		\caption{Simulation of 5-LCD with 2-region resampling}
		\label{fig: split d}
	\end{subfigure}
		\centering
	\begin{subfigure}[b]{0.45\textwidth}
		\includegraphics[width=\textwidth]{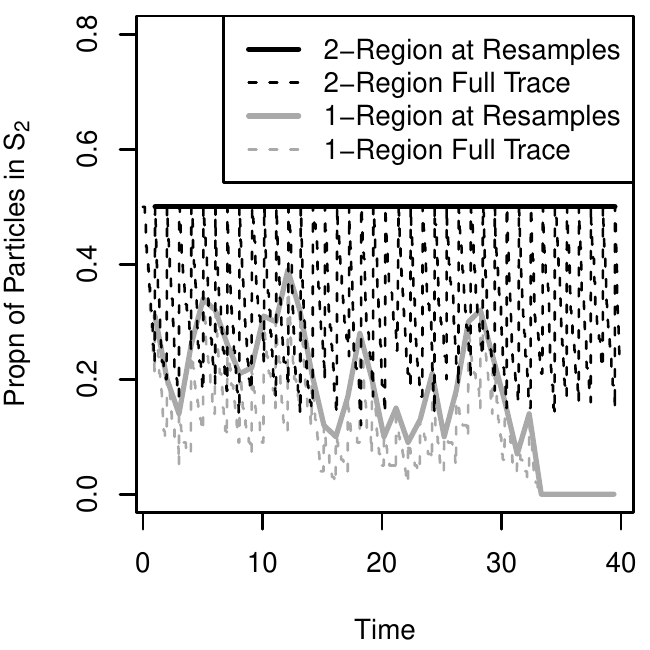}
		\caption{Proportion of particles in region $S_2$}
		\label{fig: props2}
	\end{subfigure}\quad
	\begin{subfigure}[b]{0.45\textwidth}
		\includegraphics[width=\textwidth]{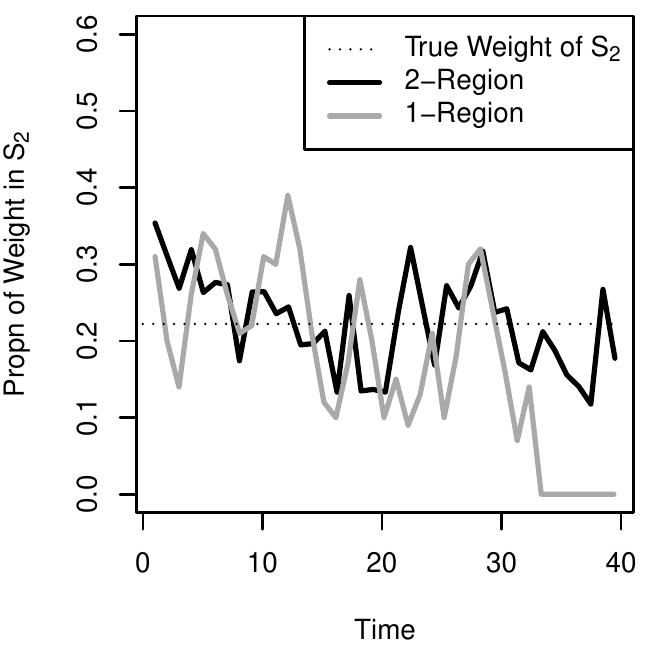}
		\caption{Proportion of total weight in $S_2$}
		\label{fig: ess12}
	\end{subfigure}
	\caption{Region Resampling for the 5-state Death Process with \\ $\delta = (3,2,3,1,3)$. $M=100$, $T_{\text{end}} = 40$, $T_{\text{step}} = 1$, 
	\\ For (a),(b) the simulation uses burn-in $T_b=20$, sampling delay $T_d=2$.}
	\label{fig: d}
\end{figure}

Figure \ref{fig: d} shows the simulated 5-LCDs for this process using particle refilling (\ref{fig: refill d}) and regional resampling with 2 regions (\ref{fig: split d}). The regional resampling was performed over the regions $S_1=\{1,2\}$ and $S_2=\{3,4,5\}$. Under particle refilling at some point all of the particles leave $S_2$ and so the SMC converges towards the 2-LCD instead of the 5-LCD. Although 5 regions could have been specified to reflect the 5 transient communicating classes, in this example 2 regions were sufficient for the SMC sampler with regional resampling to converge to the correct distribution.

Figures \ref{fig: props2} and \ref{fig: ess12} demonstrate how the proportion of particles and proportion of weights differ under particle refilling and regional resampling schemes. Although the number of particles following a resample in region $S_2$ is fixed for regional resampling, the proportion of weight in $S_2$ is not fixed, gradually converging to some value. Despite depletion between resampling times, at no point does the number of particles in $S_2$ reach zero, which ensures that regional resampling converges to the correct LCD. 
	

\section{Dynamic resampling schemes}\label{stopping}

When using an SMC sampler to draw samples from a LCD on a reducible state space, one potential problem is that all of the particles can leave a transient communicating class between resampling times. Regional resampling alleviates this to a certain extent, by ensuring that at each resampling time there are a prespecified number of particles in each region, however there still remains a positive probability that all of these particles could leave the region they started from if the time between resampling events is fixed. 
Furthermore, as the duration of the simulation increases so will the number of resampling events, and hence the probability of a region being left without any particles converges to 1 -- the SMC sampler will almost surely fail in a finite time. Clearly this property is extremely undesirable. However, if the resampling times are not fixed in advance, but instead are allowed to depend on the progress of the particles dynamically via \edit{some stopping times} \cite{liu1995, liu1998}, 
then this property can be avoided. Under typical dynamic resampling schemes, resampling is activated when particle diversity drops below some threshold, where diversity is measured by the coefficient of variation of the importance weights. However, in light of the discussion surrounding regional resampling, here we introduce a sequence of stopping times which activate resampling based on the diversity of particle locations. Other sequences of stopping times may be possible, along the lines of \emph{multi-level SMC} / \emph{stopping-time resampling} \cite{ delmoral2004, jenkins2012}. In those algorithms, particles evolve independently between resampling events and stopping times are measurable with respect to the trajectory of a single particle.

Suppose that we wish to use an SMC sampler with regional resampling to simulate from the LCD of a process $(X(t))_{t \geq 0}$ on a state space $S \cup \{0\}$. As in Section \ref{regional}, we partition the transient states into $L$ regions, where $S = \bigcup_{l=1}^L S_l$. Let $M_l(t)$ be the number of particles in region $S_l$ at time $t$ and choose $(N_1, \dots, N_L)$ to be the desired number of particles in each region after a resampling event. We wish to initiate a resampling event whenever the number of particles in region $l$ drops below $\lambda N_l$, where $\lambda\in (0,1)$ is some tuning parameter that controls how often resampling occurs.

More formally, we define the following sequence of stopping times. Let $T_0=0$, then for $k\in\mathbb{N}$, let $T_k = \min \{ T_k^{(1)}, \dots, T_k^{(L)}, (T_{k-1} + T_{\max}) \}$
where the local stopping time for region $S_l$ is given by $T_k^{(l)} = \inf \{ t > T_{k-1} : M_l(t) \leq \lambda N_l \}$.
The parameter $T_{\text{max}}\in(0,\infty]$ controls the maximum time between resampling events. At each stopping time $T_1,T_2,\dots$, a regional resampling event is triggered in all $L$ regions. Assuming that $N_l\geq 2$ for all $l$, then there must be at least one particle to resample from within each region at the time that a resampling event is triggered.

\subsection{Resampling timepoint optimisation}
The main advantage to the dynamic resampling is that the SMC sampler will definitely terminate successfully at the prescribed endpoint, whereas with all the deterministic timepoint methods, there is a non-zero probability that during the simulation, a resampling step will be unable to take place. The second advantage is also that one can tune $\lambda$, which dictates how low a region's population must get to trigger a resampling step. In doing so, one also tunes the number of resampling steps, and the speed of convergence in mean to the true LCD. 

\subsection{Example: Transient immunity process.}\label{subsec: ti}

The transient immunity process is an extension of the linear birth-death process in which individuals persist after `death' for a random time in an additional non-reproductive state. The process $X(t) = (I(t), R(t))_{t \geq 0}\in\mathbb{N}_0\times\mathbb{N}_0$ evolves as follows. New infectious individuals are created at the points of an inhomogeneous Poisson Process with rate $\beta I(t)$, at which point $I(t)\mapsto I(t)+1$. Each infection lasts an exponentially-distributed time with rate $\gamma$, and so recoveries occur at the points of a Poisson process with rate $\gamma I(t)$, at which point $(I(t),R(t))\mapsto (I(t)-1,R(t)+1)$. After recovery, individuals experience a period of immunity which lasts an exponentially distributed time with rate $\delta$. These loss of immunity events occur at the points of a Poisson process with rate $\delta R(t)$, at which $R(t)\mapsto R(t)-1$.
 
The state space of the transient immunity process has countably many communicating classes: $\{ (0,r) \}$ for each $r > 0$ plus $S_2 = \{ (i,r) : i > 0, r \geq 0 \}$.  We consider the absorbing state to be $(0,0)$ when both the infection and the immunity have left the population. Despite its relatively simple linear structure, a full characterisation of the LCDs is not available.

Work from \cite{vandoorn2013} can be extended using a coupling with the linear birth-death process to show that QSDs exist for all choices of $\beta$, $\gamma$, $\delta > 0$ such that $\gamma - \beta > 0$. Furthermore one can show that uniqueness holds for $\bu$ the $\mathbf{v}$-LCD for all $\mathbf{v}$ with finite support contained in $S_2$. 
The characterisation of $\bu$ depends on whether $\gamma - \beta > \delta$ or $\gamma - \beta < \delta$. If $\gamma - \beta > \delta$ then $\bu$ is a unit mass on the point $(I,R) = (0,1)$. If $\gamma - \beta < \delta$, one can show that $\bu$ gives mass to all states in $S$.

We first show how 2-region resampling with dynamic resampling works for the transient immunity process to give an idea of the character of the $(1,0)$-LCD, making use the two regions $S_1 = \bigcup_{r=0}^\infty \{(0,r)\}$ and $S_2 = \{ (i,r) : i > 0, r \geq 0 \}$, so $S = S_1 \cup S_2$. In Figure \ref{fig: rr low ti}, we see that under 2-region resampling, despite having $40\%$ of the particles in $S_2$ we see a negligible contribution from them. On the other hand, in Figure \ref{fig: rr high ti}, where the true $(1,0)$-LCD gives weight to all states in the state space, the simulation gives a good visualisation of the $(1,0)$-LCD.

\begin{figure}[h!]
	\centering
	\begin{subfigure}[b]{0.49\textwidth}
		\includegraphics[width=\textwidth]{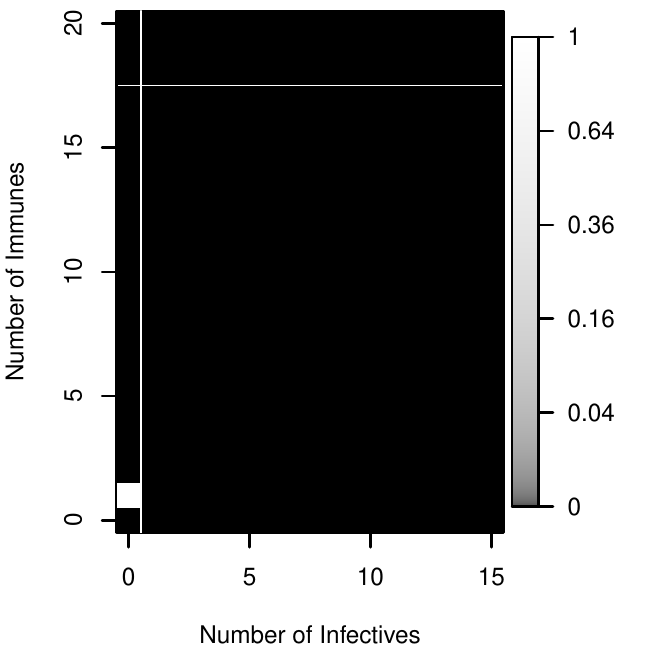}
		\caption{$\gamma-\beta=0.5$, $\delta=0.4$}
		\label{fig: rr high ti}
	\end{subfigure}
	\begin{subfigure}[b]{0.49\textwidth}
		\includegraphics[width=\textwidth]{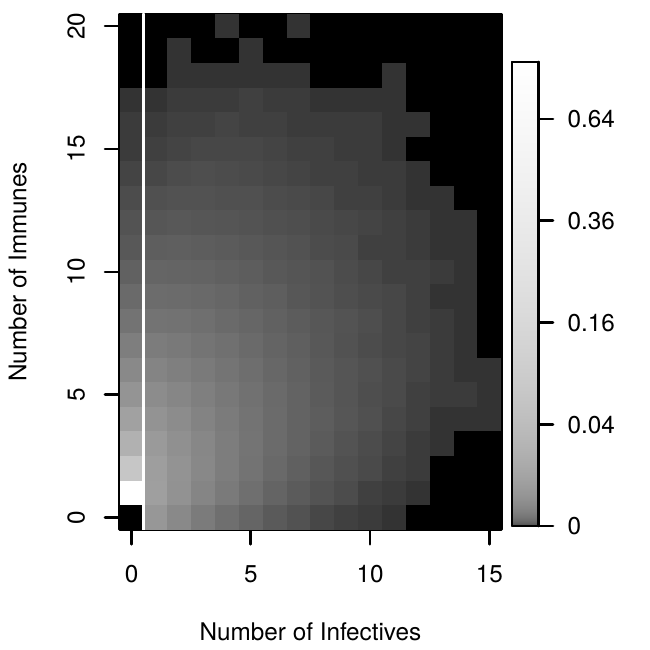}
		\caption{$\gamma-\beta=0.5$, $\delta=0.6$}
		\label{fig: rr low ti}
	\end{subfigure}
	\caption{Comparison of simulated LCDs from 2-region resampling: $M=6000, (N_1, N_2)=(4000,2000)$, $T_{\text{end}} = 450$, $T_{\text{max}} = 12$, $T_b=40$, $T_d=1$.}
	\label{fig: rr ti}
\end{figure}
	
\begin{figure}[h!]
	\begin{subfigure}[b]{0.45\textwidth}
		\includegraphics[width=\textwidth]{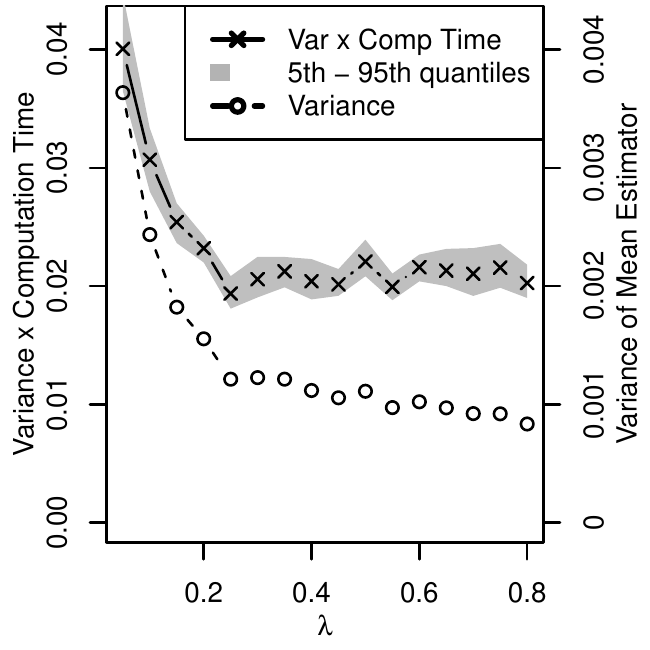}
		\caption{Performance of estimator under varying $\lambda$: $T_{\max} = 10$, computed from 2000 runs.}
		\label{fig: varX}
	\end{subfigure}
	\begin{subfigure}[b]{0.49\textwidth}
		\begin{center}
			\begin{tabular}{cccccc}\hline
				$\lambda$ & $T_{\max}=1$ & $2$ & $3$ & $4$ & $5$\\\hline
				0 & 200 & 100 & 67 & 50 & 40\\
				0.2 & 200 & 100 & 68 & 61 & 63\\
				0.4 & 200 & 107 & 105 & 105 & 105\\
				0.6 & 204 & 182 & 184 & 183 & 180\\
				0.8 & 395 & 395 & 394 & 397 & 397\\\hline
				\\
				\\
			\end{tabular}
		\end{center}
		\caption{Number of resampling steps as $\lambda$ and $T_{\max}$ change}
		\label{tab: lambda}
	\end{subfigure}
\caption{Comparison of performance under varying $\lambda$ and $T_{\max}$:\\$\beta-\gamma=0.5$, $\delta=0.6$, $T_{\text{end}} = 80$, $T_b=0$, $T_d=1$, $M=500$, $(N_1, N_2)=(300,200)$ \\
Computation performed on a single core of an Intel Core i5-vPro processor and took 13 seconds for $\lambda = 0.6$, $T_{\max} = 3$}
	\label{fig: var lam}
\end{figure}

To illustrate how the choice of $\lambda$ and $T_{\max}$ affect performance of the model, we consider how consistently such methods estimate the mean of the QSD for the transient immunity process seen previously. Under all choices of $\lambda$ and $T_{\max}$ we obtained similar estimates of the mean, but the variability of such estimates is of interest. We first fix $T_{\max} = 10$. 
By repeating our simulations we measure the Monte Carlo variance of the estimator of the mean number of infectives, and the mean computation time to run a single simulation run of $M=500$ particles; this computation time is only meant to be illustrative, and depends on the hardware and software used in general. From Figure \ref{fig: varX} we see that the variance does indeed decrease as $\lambda$ increases, but the improvement is noticeably smaller beyond $\lambda = 0.25$. 
If we take the product of the variance and mean computation time, then we see it is indeed optimal at this value of $\lambda$. For different uses, this optimal value of $\lambda$ may vary, and if computation time is not an issue, one can instead choose as large a $\lambda$ as possible.

If we furthermore vary $T_{\max}$ we see in Figure \ref{tab: lambda} that for different values of $\lambda$, and for sufficiently large $T_{\max}$, the precise choice of $T_{\max}$ is unimportant: there seems to develop a natural frequency of resampling steps. As such, it is felt that one should include a $T_{\max}$ which is slightly larger than the natural period of resampling under $\lambda$, since this $T_{\max}$ then still performs the job of avoiding particle weight degeneracy as in standard deterministic timepoint resampling schemes, but does not trigger unnecessarily frequent resampling steps. 
For $\lambda=0 $ and $T_{\max} \geq 3$, the probability of a successful simulation run was low due to the high rate of particle extinction, and so repeated simulations were required. 

\edit{
 \section{Estimating the Decay Parameter / Rate of Extinction}\label{application}
 Using the tools we have developed, we apply them to tackling an important problem in the field of QSDs, one of determining the eigenvalue associated to a given QSD or LCD $\bu$: the value of $\alpha$ for which $-\alpha \bu^{T} = \bu^{T}Q$. For irreducible processes, the maximal $\alpha$ is known as the \emph{decay parameter} \cite{vandoorn2014}.}
 
%
 To compute the decay parameter when the state space is finite may require finding the eigenvalues of a large $Q$ matrix, which is computationally intensive and numerically unstable. In the countable case, the required knowledge of $P[X(t) = i | X(0) = j]$, is frequently intractable. 
To this end, we apply the SMC sampler to obtain estimators for $\alpha$ using the following result from \cite{vandoorn2013}.
 \begin{prop}
   Let $X$ be an absorbing process on $S\cup\{0\}$ with absorbing state $0$, and let $\bu$ be a QSD for $X$. If $\bu$ is $\alpha$-invariant for $Q$, that is, \edit{$-\alpha \bu^{T} = \bu^{T}Q$},
 where $Q$ is the generator matrix $\wtilde{Q}$ restricted to $S$, then we have that $\alpha = \sum_{s \in S} u_s \tilde{q}_{s0}$, where $\tilde{q}_{s0}$ is the rate of moving from $s$ to $0$.
 \end{prop}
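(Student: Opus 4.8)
The plan is to exploit the conservativity of the full generator $\wtilde{Q}$, which forces $Q$ to fail to be conservative by exactly the absorption rates. Concretely, for each $s\in S$ we have $\tilde q_{s0}+\sum_{j\in S}\tilde q_{sj}=0$, so that $Q\mathbf{1}=-(\tilde q_{s0})_{s\in S}$, where $\mathbf{1}$ denotes the all-ones column vector indexed by $S$. The identity is then obtained by applying the functional $\bu^{T}(\cdot)\mathbf{1}$ to the $\alpha$-invariance relation.

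First I would multiply $-\alpha\bu^{T}=\bu^{T}Q$ on the right by $\mathbf{1}$. The left-hand side gives $-\alpha\,\bu^{T}\mathbf{1}=-\alpha\sum_{s\in S}u_s=-\alpha$, since $\bu$ is a proper probability distribution (Definition \ref{def: qsd}). For the right-hand side I would use the row-sum identity above to obtain $\bu^{T}Q\mathbf{1}=\sum_{s\in S}u_s(Q\mathbf{1})_s=-\sum_{s\in S}u_s\tilde q_{s0}$. Equating the two expressions and cancelling the sign yields $\alpha=\sum_{s\in S}u_s\tilde q_{s0}$, which is the claim.

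The only delicate point, and hence the step I expect to require the most care, is the interchange of summation hidden in $\bu^{T}(Q\mathbf{1})=(\bu^{T}Q)\mathbf{1}$ when $S$ is countably infinite. Since $\wtilde{Q}$ is stable and conservative each row sum $\sum_{j\in S}\tilde q_{sj}=-\tilde q_{s0}$ converges, and since $\bu$ is $\alpha$-invariant the column sums $\sum_{s\in S}u_s\tilde q_{sj}=-\alpha u_j$ are summable over $j$ with total $\alpha<\infty$; splitting off the nonnegative off-diagonal entries and applying Tonelli then legitimises the exchange under the same absolute-summability hypotheses that make $\bu^{T}Q=-\alpha\bu^{T}$ meaningful in the first place. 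An alternative route that sidesteps this bookkeeping is probabilistic: the QSD property forces the survival function $p(t):=\PP_{\bu}[X(t)\in S]$ to satisfy $p(t+s)=p(t)p(s)$, hence $p(t)=e^{-\alpha t}$ (with the same $\alpha$, since $\PP_{\bu}[X(t)=j]=e^{-\alpha t}u_j$ by $\alpha$-invariance); differentiating at $t=0$ gives $p'(0)=-\alpha$, while evaluating $p'(0)$ through the generator gives $p'(0)=\sum_{s\in S}u_s\sum_{j\in S}\tilde q_{sj}=-\sum_{s\in S}u_s\tilde q_{s0}$, and comparing the two finishes the proof.
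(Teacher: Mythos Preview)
Your proof is correct. Note, however, that the paper does not supply its own proof of this proposition: it is quoted as a known result from van Doorn and Pollett (2013), so there is no in-paper argument to compare against. Your algebraic route---right-multiplying the $\alpha$-invariance relation by $\mathbf{1}$ and using conservativity of $\wtilde{Q}$ to identify $Q\mathbf{1}$ with the negative of the absorption-rate vector $(\tilde q_{s0})_{s\in S}$---is exactly the standard derivation, and your care over the interchange of summations in the countable case (splitting off the diagonal and applying Tonelli to the nonnegative off-diagonal part) is the right way to make it rigorous. The probabilistic alternative you sketch, via the exponential survival function $\PP_{\bu}[X(t)\in S]=e^{-\alpha t}$ differentiated at $t=0$, is also a well-known and equally valid route.
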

 
 In models in which there is only one state from which the process can reach $0$ (for example population processes), only a single term contributes to the sum for $\alpha$. For example, in the pure death process we have $\alpha = u_1 \delta_1$. If one can draw iid $X_j \sim u$ for $j=1, \dots, M$ then 
 \[
   \widehat{\alpha} := \sum_{s \in S} q_{s0} \sum_{j=1}^M M^{-1} \ind_{\{X_j = s\}}
 \]
 is an unbiased estimator for $\alpha$. This estimator can therefore be implemented as follows. Following a burn-in period $T_b$, at each sampling time $t$ we record $\widehat{\alpha}$ by replacing $M^{-1}$ by normalized weights $W_j(t)$.
%
 \subsection{Example: Transient immunity process}
 
 We estimate the decay parameter $\alpha$ for the transient immunity process defined in Section \ref{subsec: ti}. Since the only exit route is via $(I,R) = (0,1)$ our estimator \edit{$\widehat{\alpha}$} reduces to $\widehat{\alpha} = \delta \sum_{j=1}^M W_j \ind_{\{X^j = (0,1)\}}$. \edit{The eigenvalue $\alpha$ associated to the LCD starting from the state $(1,0)$ is given by $\alpha = \min(\delta, \gamma - \beta)$ \cite{griffin2016}.}
 
 Figure \ref{fig: alphati} uses dynamic regional resampling with combine-split particle reallocation to estimate $\alpha$. The estimates are accurate in the two regimes of the process, where $\delta$ is larger (or smaller) than $\gamma - \beta$. However, stochastic effects cause the decay parameter to be underestimated close to the critical value $\delta=\gamma-\beta$.
 \begin{figure}[h!]
  	\centering
  	\includegraphics[width=0.6\textwidth]{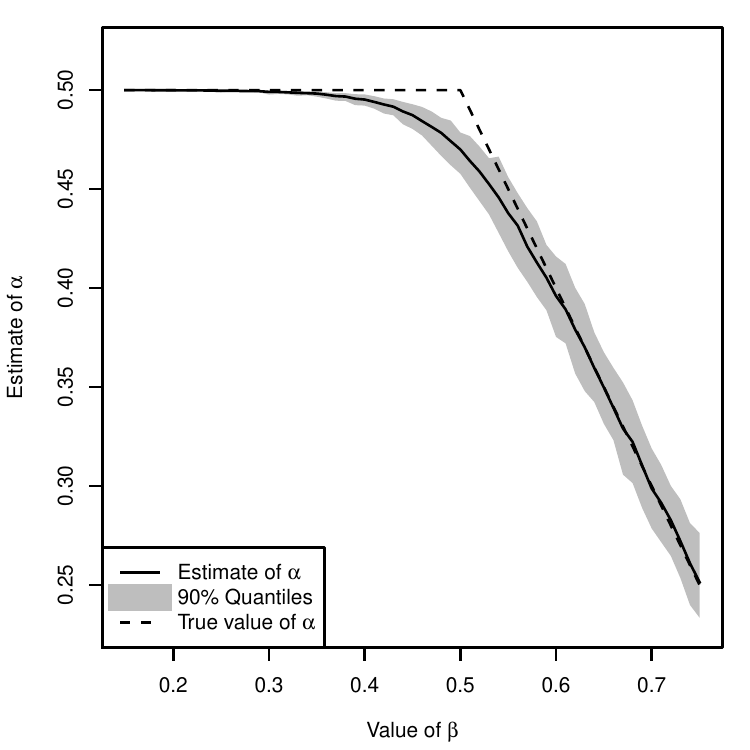}
  	\caption{Estimate of $\alpha$ as $\beta$ changes. $\gamma = 1$, $\delta=0.5$, $T_{\text{end}}=60$, $T_{\max}=5$, $T_b = 20$, $T_d= 1$, $M=400$, $N=(200,200)$}
  	\label{fig: alphati}
 \end{figure}

\section{Conclusion}
In this paper we have developed a toolbox of techniques for simulating from LCDs on reducible state spaces using SMC samplers. We have shown how combine-split resampling can improve particle diversity when the state space is discrete. We introduced regional resampling to allow more control of the distribution of the particles in the SMC sampler. Although our focus was on simulating LCDs on reducible state spaces, we anticipate that regional resampling could prove useful for SMC samplers designed for other purposes. We also demonstrated that by adopting a dynamic resampling scheme based on the number of particles in each region, we could prevent failures in the SMC sampler that were otherwise certain to occur in finite time.

Since there is always a finite number of particles in the SMC sampler, it is not possible to use these techniques to sample from the `high energy' QSDs that sometimes exist for processes on infinite state spaces, as these distributions have infinite mean. Designing a mechanism to sample from such LCDs remains an open problem. To further this work, we would like to know whether there is a more systematic approach to the selection of $\lambda$ and $(N_1, \dots, N_L)$ to best approximate the true LCD. \edit{Moreover, in the case of complex models where regions of interest are not known, dynamic regional resampling where regions update according to certain statistics could be investigated.}

\ack
Adam Griffin's research is funded by EPSRC: Grant Number EP/HO23364/1.

\bibliographystyle{apt}
\bibliography{biblio3}

\begin{thebibliography}{10}

\bibitem{blanchet2013}
{\sc Blanchet, J., Glynn, P. and Zheng, S.} (2016).
\newblock Analysis of a stochastic approximation algorithm for computing
  quasi-stationary distributions.
\newblock {\em Adv. Appl. Prob.\/} {\bf 48,} 792--811.

\bibitem{clancy2003}
{\sc Clancy, D. and Pollett, P.} (2003).
\newblock A note on quasi-stationary distributions of birth-death processes and
  the {SIS} logistic epidemic.
\newblock {\em J. App. Prob.\/} {\bf 40,} 821--825.

\bibitem{darroch1967}
{\sc Darroch, J. and Seneta, E.} (1967).
\newblock On quasi-stationary distributions in absorbing continuous-time finite
  {M}arkov chains.
\newblock {\em J. App. Prob.\/} {\bf 4,} 192--196.

\bibitem{delmoral2004}
{\sc {Del Moral}, P.} (2004).
\newblock {\em Feynman-Kac Formulae: Genealogical and Interacting Particle
  Systems with Applications}.
\newblock Springer-Verlag, New York.

\bibitem{delmoral2006}
{\sc Del~Moral, P., Doucet, A. and Jasra, A.} (2006).
\newblock Sequential {M}onte {C}arlo samplers.
\newblock {\em J. Roy. Stat. Soc. B (Met.)\/} {\bf 68,} 411--436.

\bibitem{douc2008}
{\sc Douc, R. and Moulines, E.} (2008).
\newblock Limit theorems for weighted samples with applications to sequential
  monte carlo methods.
\newblock {\em The Ann. of Stats\/} 2344--2376.

\bibitem{doucet2001}
{\sc Doucet, A., De~Freitas, N. and Gordon, N.} (2001).
\newblock {\em Sequential {M}onte {C}arlo methods in practice}.
\newblock Springer.

\bibitem{etheridge2011}
{\sc Etheridge, A.} (2011).
\newblock {\em Some Mathematical Models from Population Genetics: {\'E}cole
  D'{\'E}t{\'e} de Probabilit{\'e}s de Saint-Flour XXXIX-2009} vol.~39.
\newblock Springer Science \& Business Media.

\bibitem{griffin2016}
{\sc Griffin, A.} (2016).
\newblock Quasi-stationary distributions for epidemic models: Simulation and
  characterisation.
\newblock {\em PhD thesis}.

\bibitem{groisman2012}
{\sc Groisman, P. and Jonckheere, M.} (2013).
\newblock Simulation of quasi-stationary distributions on countable spaces.
\newblock {\em Markov Proc. \& Rel. Fields\/} {\bf 19,} 521--542.

\bibitem{hairer2014}
{\sc Hairer, M., Stuart, A. and Vollmer, S.} (2014).
\newblock Spectral gaps for a {M}etropolis--{H}astings algorithm in infinite
  dimensions.
\newblock {\em Ann. App. Prob.\/} {\bf 24,} 2455--2490.

\bibitem{jenkins2012}
{\sc Jenkins, P.} (2012).
\newblock Stopping-time resampling and population genetic inference under
  coalescent models.
\newblock {\em Stat. App. in gen. and Mol. Bio.\/} {\bf 11,} Article 9.

\bibitem{kong1994}
{\sc Kong, A., Liu, J. and Wong, W.} (1994).
\newblock Sequential imputations and {B}ayesian missing data problems.
\newblock {\em J. Amer. Statis. Assoc.\/} {\bf 89,} 278--288.

\bibitem{lambert2008}
{\sc Lambert, A.} (2008).
\newblock Population dynamics and random genealogies.
\newblock {\em Stochastic models\/} {\bf 24,} 45--163.

\bibitem{linetal2013}
{\sc Lin, M., Chen, R. and Liu, J.~S.} (2013).
\newblock Lookahead strategies for {Sequential Monte Carlo}.
\newblock {\em Statistical Science\/} {\bf 28,} 69--94.

\bibitem{liu1995}
{\sc Liu, J. and Chen, R.} (1995).
\newblock Blind deconvolution via sequential imputations.
\newblock {\em J. Amer. Statist. Assoc.\/} {\bf 90,} 567--576.

\bibitem{liu1998}
{\sc Liu, J. and Chen, R.} (1998).
\newblock Sequential {M}onte {C}arlo methods for dynamic systems.
\newblock {\em J. Amer. Statist. Assoc.\/} {\bf 93,} 1032--1044.

\bibitem{meleard2012}
{\sc M\'{e}l\'{e}ard, S. and Villemonais, D.} (2012).
\newblock Quasi-stationary distributions and population processes.
\newblock {\em Probability Surveys\/} {\bf 9,} 340--410.

\bibitem{nasell1999}
{\sc Nasell, I.} (1999).
\newblock On the quasi-stationary distribution of the stochastic logistic
  epidemic.
\newblock {\em Math. Biosci.\/} {\bf 156,} 21--40.

\bibitem{neal2014}
{\sc Neal, P. et~al.} (2014).
\newblock Endemic behaviour of {SIS} epidemics with general infectious period
  distributions.
\newblock {\em Adv. in App. Prob.\/} {\bf 46,} 241--255.

\bibitem{pollock2016scalable}
{\sc Pollock, M., Fearnhead, P., Johansen, A. and Roberts, G.} (2016).
\newblock The scalable langevin exact algorithm: Bayesian inference for big
  data.
\newblock {\em arXiv preprint arXiv:1609.03436\/}.

\bibitem{vandoorn2008}
{\sc van Doorn, E. and Pollett, P.} (2008).
\newblock Survival in a quasi-death process.
\newblock {\em Lin. Alg. and App.\/} {\bf 429,} 776--791.

\bibitem{vandoorn2013}
{\sc van Doorn, E. and Pollett, P.} (2013).
\newblock Quasi-stationary distributions for discrete-state models.
\newblock {\em Euro. J. Op. Res.\/} {\bf 230,} 1--14.

\bibitem{vandoorn2014}
{\sc van Doorn, E.~A.} (2014).
\newblock Representations for the decay parameter of a birth-death process
  based on the {C}ourant-{F}ischer theorem.

\bibitem{wri:1931}
{\sc Wright, S.} (1931).
\newblock Evolution in {Mendelian} populations.
\newblock {\em Genetics\/} {\bf 16,} 97--159.

\end{thebibliography}

\end{document}